\documentclass[a4paper,11pt]{article}
\usepackage{geometry}
\usepackage{mathtools,amsfonts,amsthm,mathrsfs,amssymb}
\usepackage{amsmath}
\usepackage[utf8]{inputenc}
\usepackage[T1]{fontenc} 
\usepackage{lmodern} 
\usepackage{bm}
\usepackage{enumitem}
\usepackage{todonotes}
\usepackage{array}
\usepackage{tikz}
\usepackage{xspace}
\usepackage{cite}
\usepackage{hyperref}
\usepackage{cleveref}
\usepackage{authblk}
\usepackage{comment}

\newtheorem{thm}{Theorem}
\newtheorem{lemma}[thm]{Lemma}

\newtheorem{proposition}[thm]{Proposition}
\newtheorem{corollary}[thm]{Corollary}

\newtheorem{claim}{Claim}[thm]

\theoremstyle{definition}
\newtheorem{problem}[thm]{Problem}
\newtheorem{remark}[thm]{Remark}

\newtheorem*{definition*}{Definition}

\def\bigO{\ensuremath{{\cal O}}\xspace}

\newcommand{\dI}{\vec{\mathcal{I}}}
\newcommand{\cI}{\mathcal{I}}

\newcommand{\wI}{\widetilde{I}}
\newcommand{\tin}{\texttt{tin}}
\newcommand{\tout}{\texttt{tout}}

\newcommand{\sset}[2]{\left\{#1 \;\middle |\; #2 \right \}}

\newcommand{\ceil}[1]{\left\lceil #1 \right\rceil}

\newcommand{\restrict}[2]{{#1}_{|{#2}}}

\newenvironment{subproof}{\par\noindent {\it Proof of claim}.\ }{\hfill$\lozenge$\par\vspace{11pt}}

\DeclareMathOperator{\lca}{LCA}

\newcommand{\ext}{\mathscr{C}_0}

\newcommand{\DC}[1]{{\color{olive} [{\bf David}: #1]}\xspace}

\newcommand{\FP}[1]{{\color{orange}{{\bf FP}: #1}}\xspace}

\title{Colouring the interference digraph of a set of requests in a bidirected tree}

\author[1]{Hugo Boulier}
\author[2]{David Coudert}
\author[2]{Fr\'ed\'eric Havet}
\author[3]{Fran\c{c}ois Pirot}

\affil[1]{\'Ecole normale sup\'erieure de Rennes, France}
\affil[2]{Universit\'e C\^ote d'Azur, CNRS, Inria, I3S, Sophia Antipolis, France}
\affil[3]{LISN, Université Paris-Saclay, Gif sur Yvette, France.}

\begin{document}

\maketitle

\begin{abstract}
In this paper, we investigate the impact of the broadcast effect arising in filterless optical networks on the computational complexity of the wavelength assignment problem.
We model conflicts using an appropriate interference digraph, whose proper colourings correspond to feasible wavelength assignments. Minimizing the number of required wavelengths therefore amounts to determining the chromatic number of this interference digraph.
Within this framework, we first present a polynomial-time 2-approximation algorithm for minimizing the number of wavelengths. We then show that the problem is fixed-parameter tractable when parameterized by the number $k$ of available wavelengths.
We also derive polynomial-time algorithms for computing the independence and clique numbers of this interference digraph.
\end{abstract}

\section{Introduction}

In this paper, we investigate the computational complexity of the wavelength assignment problem in filterless optical networks. These networks use wavelength division multiplexing (WDM) technology to share the optical spectrum of optical fibers and passive optical devices (splitters and combiners) to perform the interconnections at the router nodes.
Recall that an optical network is constituted of a set of router nodes connected by directed optical fibers, and so the network topology of such networks can naturally be modelled as a directed graph.
In WDM networks, each traffic request must be assigned a \emph{lightpath} from its source to its destination, that is a path in the network and a wavelength. The wavelength assignment must satisfy the WDM constraint that two lightpaths using a same optical fiber must be assigned different wavelengths.
The optical router nodes are usually equipped with Reconfigurable Optical Add-Drop Multiplexers (ROADMs) to guide each lightpath from an input port to an output port.  The use of ROADMs in WDM networks allows for a very efficient use of the optical spectrum but the overall cost of the network is high.
In order to find a better compromise between the cost of the network and the spectrum efficiency, the use of cheap passive optical components has been proposed~\cite{tza99}. 

In filterless optical networks~\cite{sav10,Ayoub2022}, an optical signal entering a router node is simply broadcast to all fibers outgoing the node, except the one going in the opposite direction of the input optical fiber. This is done using cheap passive optical devices (splitters and combiners). Furthermore, some nodes are equipped with blockers to stop the propagation of the optical signal at certain nodes. To simplify the management of such networks, the blockers are usually placed in order to separate the network into a set of trees, and a traffic request is assigned to a unique tree. Due to the broadcast effect, the spectrum efficiency of filterless optical networks is reduced compared to WDM networks equipped with ROADMs. Indeed, the wavelength used for a lightpath is broadcast into an entire tree, thus increasing the number of conflicts for the assignment of wavelengths. However, both the capital expenditure of building the network and the operational expenditure of controlling the network operations are reduced.

The general optimization problem in filterless optical network is, given a network topology and a set of traffic requests, to partition the network topology into a set of trees and assign each traffic request a lightpath in a tree in such a way that the total number of distinct wavelengths used in the network is minimized. Several exact and heuristic methods have been proposed for solving this optimization problem~\cite{tre07,tre13,tre17,jau18,jau21_JLT,Jaumard2021,lv2023,Etezadi2024,Karandin2025}. 
In particular, \cite{jau21_JLT,Jaumard2021} proposed a mathematical model of the problem that considers as variables sub-networks with a tree topology along with traffic requests and lightpaths. Since routing is obvious in a tree, this model raises the question of the computational complexity of the wavelength assignment problem that we study in this paper. 

Observe that when a path is given as input for each traffic request, the wavelength assignment problem in WDM networks equipped with ROADMs is equivalent to the problem of colouring a set of paths in a directed graph so that two paths sharing an arc get different colours~\cite{Chlamtac1992}. This later problem is NP-complete~\cite{ERLEBACH2001,Beauquier1997} even when the digraph is a bidirected cycle~\cite{Tucker1975} or a bidirected tree~\cite{GOLUMBIC1985,Tarjan1985}. Indeed, the paths colouring problem corresponds to a vertex colouring problem in the conflict graph of the paths, that is a graph with one vertex per path and an edge between two vertices if the corresponding paths share an arc.

\subsection{Modelling}

In filterless optical networks, the broadcast effect increases the number of conflicts between traffic requests for the wavelength assignment. We model these conflicts using an \emph{interference digraph} in which there is a vertex for each traffic request, and there is an arc from vertex $r$ to vertex $r'$ if the broadcast tree of request $r$ intersects the lightpath of request $r'$. Then, assigning wavelengths to requests is equivalent to properly colouring the vertices of the interference digraph. 

More precisely, the broadcast tree can be modelled by a \emph{bidirected tree} $T$, that is a digraph obtained from an undirected tree by replacing each edge by two arcs in opposite directions.
A {\it request} in $T$ is a directed subpath of $T$ of length at least $1$.
Let $r$ be a request or more generally a directed path.  We denote by $s_r$ its first vertex, by $t_r$ its ultimate vertex,
by $s_r^+$ its second vertex and by $t_r^-$ its penultimate vertex. The \emph{emission arc} of $r$ is the arc $e^+_r=(s_r,s_r^+)$, and the \emph{reception arc} of $r$ is the arc $e^-_r=(t_r^-,t_r)$.

Let $R$ be a set of requests in a bidirected tree $T$.
A request $r$ {\it interferes on} a request $r'$ if and only if $T[s_r, t_{r'}]$, the unique directed path from $s_r$ to $t_{r'}$ in $T$, has first arc $e^+_r$ and ultimate arc $e^-_{r'}$.
The \emph{interference digraph} of $R$ in $T$, denoted by $\dI(R,T)$, is the digraph with vertex set $R$ in which $(r,r')$ is an arc if and only if $r$ interferes on $r'$.
Two requests {\it interfere with} each other if one of them interferes on the other.
The \emph{interference graph} of $R$ in $T$, denoted by $\cI(R,T)$, is the underlying graph of $\dI(R,T)$. In other words, it is the graph with vertex set $R$ in which $\{r,r'\}$ is an edge if and only if $r$ and $r'$ interfere.

\subsection{Our results}
In this paper, given a bidirected tree $T$ and a set of requests $R$ in $T$, we are interested in $\omega(R,T)$ the size of a largest subset of pairwise interfering requests and $\alpha(R,T)$ the size of a largest subset of pairwise non-interfering (or independent) requests. Equivalently, $\omega(R,T)$ is the clique number of $\cI(R,T)$, that is $\omega(R,T) = \omega(\cI(R,T))$, and $\alpha(R,T)$ is the independence number of $\cI(R,T)$, that is $\alpha(R,T) = \alpha(\cI(R,T))$.
We are also interested in the minimum number $\chi(R,T)$ of colours (wavelengths or frequencies) to assign to the requests so that any two interfering requests are assigned different colours. This is the chromatic number of $\cI(R,T)$, that is $\chi(R,T) = \chi(\cI(R,T))$.

We show in Section~\ref{sec:inde-clique}, that given $R$ and $T$, one can compute $\alpha(R,T)$ and $\omega(R,T)$ in polynomial time.
Then, in Section~\ref{sec:chi-bound}, we prove $\chi(R,T)\leq 2\, \omega(R,T)$, and describe a simple $2$-approximation polynomial-time algorithm for $\chi(R,T)$.

We then study in Section~\ref{sec:kcolourability} the complexity of computing $\chi(R,T)$ and the following associated decision problem.

\noindent\textsc{Interference Colourability}\\
\underline{Input:} A set $R$ of requests in a bidirected tree $T$, and an integer $k$.\\
\underline{Question:} $\chi(R,T) \leq k$ ?

\medskip
In particular, we consider the restrictions of this problem when $k$ is fixed.

\noindent\textsc{Interference $k$-Colourability}\\
\underline{Input:} A set $R$ of requests in a bidirected tree $T$.\\
\underline{Question:} $\chi(R,T) \leq k$ ?

\medskip
We show in Section~\ref{sec:3colourability} that \textsc{Interference $3$-Colourability} can be solved in $\bigO(|R|^2)$ time. Then, in Section~\ref{sec:4colourability}, we prove that \textsc{Interference $k$-Colourability} with $k \geq 4$ is fixed parameter tractable (FPT) when parameterized by $k$, or equivalently by the size of the largest clique in $\cI(R,T)$. We propose an FPT algorithm with time complexity in $\bigO(k36^k|R|^3)$
for solving this problem.

We conclude this paper in Section~\ref{sec:conclusion} with some open problems.

\section{Preliminaries}

\subsection{Notations and definitions}

An {\it out-tree} (resp. {\it in-tree})  is an oriented rooted tree in which all the arcs are directed away from  (resp. towards) the root. 
Given a bidirected tree $T$ and a vertex $z\in V(T)$, we denote by $T[z]$ the rooted tree isomorphic to $T$ whose root is $z$. A \emph{bough} in a bidirected tree is a bidirected path between two leaves, and a \emph{branch} in a rooted bidirected tree is a bidirected path from the root to any leaf. 
An {\it in-branch} in a rooted bidirected tree is a directed path from a leaf to the root, and an {\it out-branch} in a rooted bidirected tree is a directed path from the root to the leaf.

Given a directed path $P=(v_0, \dots, v_\ell)$, we denote by $P[v_i,v_j]$ the directed subpath $(v_i, \dots , v_j)$, for any $0\le i \leq j \le \ell$.
More generally, if $T$ is a bidirected tree or an in-tree or an out-tree, we denote by $T[x,y]$ the unique directed path from $x$ to $y$ in $T$ (if it exists).
If $T$ is a bidirected tree, we denote by 
$T\langle x,y\rangle$ the bidirected path with ends $x$ and $y$ in $T$.

The {\it converse} of a digraph $D$, denoted by $\overleftarrow{D}$, is the digraph obtained from $D$ by reversing the direction of all arcs: $V(\overleftarrow{D}) = V(D)$ and $A(\overleftarrow{D}) = \{(u,v) : (v,u) \in A(D)\}$.

Given a graph $G$ and a set $S$ of vertices of $G$, we denote by $\overline{S}$ the set $V(G)\setminus S$, and by $G\langle S\rangle$ the subgraph of $G$ induced by $S$. 
The neighbourhood of a vertex $v$ is denoted by $N(v)$ and its {\it closed neighbourhood} is $N[v]=N(v) \cup \{v\}$.

\subsection{Reducing the problem}\label{subsec:reducing}

\paragraph{Reducing the size of the tree.}

Let $R$ be a set of requests in a bidirected tree $T$.
Observe that if two opposite arcs in $T$ are neither emission nor reception arc, then we can contract those two arcs (i.e. remove them and identify their end-vertices) in both $T$ and the requests to get a new bidirected tree $T'$ and set of requests $R'$ with same interference digraph: $\dI(R',T') = \dI(R,T)$.
Doing so, contracting one after another the pair of opposite arcs with no emission and no reception arcs, we can reduce the problem in $\bigO(|T|)$ time to an instance $(T^*, R^*)$ 
such that $\dI(R^*,T^*) = \dI(R,T)$ and $|V(T^*)|\leq 2 |R^*| +2$.

Hence, free to first perform the above $\bigO(|T|)$-time reduction, 
{\bf we assume that every pair of opposite arcs contains an emission or reception arc}, and so  $$|V(T)|\leq 2 |R| +2.$$

\paragraph{Computing the interference digraph.}

For each of the $|R|$ request $r\in R^*$, one can build in $\bigO(|T|)= \bigO(|R|)$ time the out-tree $T^+(r)$ of vertices that are reachable from $s^+_r$ in $T\setminus \{s_rs^+_r, s^+_rs_r\}$ and thus find the out-neighbours of $r$ in  $\dI(R,T)$. 
Hence in $\bigO(|R|^2)$ time one can compute $\dI(R,T)$.
Note that this time is fastest possible in general because $\dI(R,T)$ can have 
$\Omega(|R|^2)$ arcs.

In the algorithms we provide, it is sometimes needed to construct interference digraphs of subsets of requests on a tree. Whenever this is required, the associated cost never exceeds the cost of the operations performed on that digraph. For that reason, and to lighten the proofs, we make these constructions implicit and never mention their cost.


\subsection{Structural properties}\label{sec:structuralproperties}

Let $T$ be a bidirected tree rooted at $z$. 
It is the arc-union of the in-tree $T^-$, whose arcs are those directed towards the root, and the out-tree $T^+$, whose arcs are those directed away from the root.

In $T$, a directed path (and in particular a request) is either {\it converging} if it is directed towards $z$ (i.e. contained in $T^-$), either {\it diverging} if it is directed away from $z$ (i.e. contained in $T^+$),
and {\it unimodal} otherwise (i.e. the concatenation of a subpath of $T^-$ and a subpath of $T^+$).
Given a set $R$ of requests on $T$, we denote by $R^-$ (resp. $R^+$, $R^\vee$) the set of converging (resp. diverging, unimodal) requests of $R$. 
By definition, $(R^-, R^+, R^\vee)$ is a partition of $R$.

If $r$ is a directed subpath of $T$, the {\it middle} of $r$, denoted by $m_r$, is its vertex which is closest to the root. If $r$ is a converging (resp. diverging) path, then $m_r=t_r$ (resp. $m_r=s_r$). If $r$ is unimodal, then it is the vertex of $r$ which is the head of a converging arc and the tail of a diverging arc. The predecessor of $m_r$ along $r$ is denoted by $m^-_r$ and its successor by $m^+_r$, should they exist.

Let $x,y$ be two vertices of $T$. We say that $x$ is an {\it ancestor} of $y$ and $y$ is a {\it descendant} of $x$ if $T[x,y]$ is diverging. A path of one vertex is diverging, so every vertex is an ancestor of itself.
We say that $x$ and $y$ are \emph{related} if $x$ is an ancestor of $y$ or $y$ is an ancestor of $x$. In other words, $x$ and $y$ are related if and only if $T[x,y]$ is either converging or diverging.
Observe that two ancestors of a same vertex are related.

The {\it least common ancestor} of $x$ and $y$ is the vertex that is furthest away from the root $z$ which is both an ancestor of $x$ and of $y$. This corresponds to the middle of $T[x,y]$.



\medskip

We can now state and prove a lemma that characterizes when two requests interfere. We will sometimes use this lemma without 
explicitly referring to it.

\begin{lemma}
\label{lem:interference}
Let $T$ be a bidirected tree rooted at $z$.
\begin{enumerate}[label={\rm(\roman*)}]
    \item Two converging requests $r,r'$ interfere if and only if $t_r^-$ and $t_{r'}^-$ are related.
    \label{it:converging}
    \item Two diverging requests $r,r'$  interfere if and only if $s_r^+$ and $s_{r'}^+$ are related.
    \label{it:diverging}

    \item A converging request $r$ and a diverging  request $r'$ interfere if and only if $s_r$ and $t_{r'}$ are not related. 
    \label{it:mixte}

    \item Two unimodal requests $r,r'$ do not interfere if and only if $m_{r}=m_{r'}$, $s_{r}$ and $t_{r'}$ are related, and $s_{r'}$ and $t_{r}$ are related. \label{it:unimodal}
    
    \item A unimodal request $r$ and a converging request $r'$ do not interfere if and only if $m_r$ is an ancestor of $t_{r'}$ and
    $t_r$ and $s_{r'}$ are related.
    \label{it:uni-conv}
    
    \item A unimodal request $r$ and a diverging  request $r'$ do not interfere if and only if $m_r$ is an ancestor of $s_{r'}$ and
    $s_r$ and $t_{r'}$ are related.
    \label{it:uni-div} 
    
\end{enumerate}
\end{lemma}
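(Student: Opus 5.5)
The plan is to reduce everything to one reformulation of the definition. Then each item becomes a case analysis on where the two arcs $e^+_r$ and $e^-_{r'}$ lie relative to the root $z$.

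\textbf{Basic reformulation.} $r$ interferes on $r'$ if and only if $T[s_r,t_{r'}]$ starts with $e^+_r$ and ends with $e^-_{r'}$. Removing the pair of arcs between $s_r,s_r^+$ splits $T$ into two components. The condition ``starts with $e^+_r$'' says $t_{r'}$ lies in the component $C^+_r$ containing $s^+_r$. Similarly, ``ends with $e^-_{r'}$'' says $s_r$ lies in the component $C^-_{r'}$ containing $t^-_{r'}$ after deleting the pair between $t^-_{r'},t_{r'}$. Moreover, when $t_{r'}\in C^+_r$, we have $s_r\neq t_{r'}$, so the path has length at least one. Both conditions have a clean rooted description:
\begin{itemize}
\item If $e^+_r$ is diverging, then $C^+_r$ is the subtree of descendants of $s^+_r$.
\item If $e^+_r$ is converging, then $C^+_r$ is the complement of the subtree of descendants of $s_r$.
\end{itemize}
The same holds for $e^-_{r'}$ and the subtree rooted at $t^-_{r'}$ or $t_{r'}$.

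\textbf{Items (i)--(iii).} For (i), both $r,r'$ are converging, so $s_r^+$ and $t_r$ are ancestors of $s_r$, and $t_r^-$ is a child of $t_r$. Then $r$ interferes on $r'$ if and only if $t_{r'}$ is not a descendant of $s_r$ and $s_r$ is a descendant of $t^-_{r'}$. Symmetrically for $r'$ on $r$. I will show that the disjunction of these two conditions is equivalent to $t^-_r,t^-_{r'}$ being related. Suppose, say, $t^-_{r'}$ is an ancestor of $t^-_r$. Then $s_r$, a descendant of $t^-_r$, is a descendant of $t^-_{r'}$. If $t_{r'}$ were a descendant of $s_r$, then $t^-_{r'}$ would be a proper descendant of $t_{r'}$, hence of $s_r$, hence of $t^-_r$, contradicting our assumption unless the two coincide, which is impossible. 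Conversely, if either interference holds, then $t^-_{r'}$ is an ancestor of $s_r$, as is $t^-_r$, so the two are related. Item (ii) follows from (i) by taking the converse: reversing all arcs swaps $s$ and $t$ and converging/diverging, and the interference digraph of reversed requests is the converse of the original. For (iii), $r$ converging can interfere on $r'$ diverging, and $r'$ on $r$. I will check that each of the two directional conditions reduces to: $t_{r'}$ is outside the subtree of $s_r$, and $s_r$ is outside the subtree of $t_{r'}$.

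\textbf{Items (iv)--(vi).} For (iv)--(vi) I will prove the negation by contrapositive in both directions, using the same component description. A unimodal $r$ has a converging emission arc and a diverging reception arc, so $C^+_r$ is the complement of the subtree of $s_r$. In (iv), $r$ does not interfere on $r'$ if and only if $t_{r'}$ is a descendant of $s_r$, or $s_r$ is a descendant of $t_{r'}$ and not in the subtree at $t^-_{r'}$. I will translate ``both directions fail'' into the stated three conditions. The key observation is that $m_r$ is the least common ancestor of $s_r,t_r$. Then relatedness of $s_r$ with $t_{r'}$ and of $s_{r'}$ with $t_r$ forces the middles to coincide, because the branches of a unimodal path at its middle are distinct children. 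Items (v) and (vi) are treated the same way, and (vi) is the converse of (v).

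\textbf{Main obstacle.} I expect the main difficulty to be (iv): keeping track of which of the two ``descendant-or-not'' alternatives holds in each direction, and ruling out mixed cases. I will organise it by comparing $m_r$ and $m_{r'}$: one is a proper ancestor of the other, they are equal, or they are unrelated. In the unequal cases I will exhibit interference directly.
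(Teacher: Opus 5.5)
Your cut reformulation is correct: $r$ interferes on $r'$ if and only if $t_{r'}$ lies on the $s^+_r$-side of the edge $s_rs^+_r$ and $s_r$ lies on the $t^-_{r'}$-side of the edge $t^-_{r'}t_{r'}$, and your rooted description of the two sides is also right. It is a more uniform tool than the paper's reasoning about whether $T[s_r,t_{r'}]$ is converging, diverging or unimodal. Applied correctly, it would even make rigorous the steps the paper dismisses in (iv) with ``necessarily'' and ``one easily sees''. Items (i), (ii), and the derivation of (vi) from (v) by taking converses go through as you describe. However, two of the concrete reductions you announce are wrong as written.

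In (iii) you treat ``$r'$ interferes on $r$'' as a real possibility that should reduce to the same non-relatedness condition. It cannot. By your own rules ($e^+_{r'}$ is diverging, $e^-_r$ is converging), this direction requires $t_r$ to lie in the subtree of $s^+_{r'}$ and $s_{r'}$ to lie in the subtree of $t^-_r$. That would make each of $s_{r'}$ and $t_r$ a proper descendant of the other. So this direction never occurs; the paper puts it as: no directed path of $T$ starts with a diverging arc and ends with a converging arc. You must rule it out explicitly, otherwise the ``only if'' part of (iii) is not covered. Item (iii) then follows from the single direction $r$ on $r'$.

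In (iv), your criterion is false as stated. You claim that $r$ does not interfere on $r'$ iff $t_{r'}$ is a descendant of $s_r$, or $s_r$ is a descendant of $t_{r'}$ and not in the subtree at $t^-_{r'}$. Since $r'$ is unimodal, $t^-_{r'}$ is the parent of $t_{r'}$, so the second disjunct is empty. For a concrete counterexample, let $m$ have children $a,b$ and let $a'$ be a child of $a$. Take $r=(a',a,m,b)$ and $r'=(b,m,a)$. These do not interfere, as (iv) asserts: the last arc of $T[a',a]$ is $(a',a)\neq e^-_{r'}$, and $T[b,b]$ has no arcs. Yet your criterion says $r$ interferes on $r'$. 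Your own rule for a diverging reception arc ($C^-_{r'}$ is the complement of the subtree of $t_{r'}$) gives the correct statement: $r$ does not interfere on $r'$ iff $s_r$ and $t_{r'}$ are related. With that, (iv) is exactly the conjunction of the two relatedness conditions, and your comparison of $m_r$ with $m_{r'}$ shows that $m_r=m_{r'}$ is implied.

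Item (v) also needs care. Your framework yields: non-interference iff $t^-_{r'}$ is not an ancestor of $s_r$ and $t_r,s_{r'}$ are related. The first condition is equivalent to ``$m_r$ is an ancestor of $t_{r'}$'' only when the second also holds (take $t_{r'}$ unrelated to $m_r$ to see they differ in general). So the translation to the paper's form must use both conditions together, not each one separately.
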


\begin{proof}

\ref{it:converging} Let $r$ and $r'$ be two converging requests. 

Assume $r$ interferes on $r'$. 
Then $T[s_r, t_{r'}]$ has first arc $e^+_r$ and ultimate arc $e^-_{r'}$. Since $r$ and $r'$ are converging, the two arcs $e^+_r$ and $e^-_{r'}$ are directed towards the root, so $T[s_r, t_{r'}]$ is a converging path.
Hence $t^-_r$ and $t^-_{r'}$ are both ancestors of $s_r$ and thus they are related.

Conversely assume that $t^-_r$ and $t^-_{r'}$ are related.
Without loss of generality, we may assume that $t^-_{r'}$ is an ancestor of $t^-_{r}$. But then $T[s_r, t_{r'}]$ is a converging path with first arc $e^+_r$ and ultimate arc $e^-_{r'}$. Hence $r$ interferes on $r'$.

\medskip

\ref{it:diverging} Similar to \ref{it:converging} by directional duality.

\medskip

\ref{it:mixte} Assume that $r$ and $r'$ interfere.
Since a directed path in $T$ is either converging, diverging or unimodal, it cannot have its first arc diverging and its ultimate arc converging.  In particular, there is no directed path in $T$ with first arc $e^+_{r'}$ and ultimate arc $e^-_{r'}$.
Hence $r'$ cannot interfere on $r$.
Thus $r$ interferes on $r'$, and $T[s_r, t_{r'}]$ is a directed path with first arc $e^+_r$ and ultimate arc $e^-_{r'}$.
Since $e^+_r$ is converging and $e^-_{r'}$ is diverging, $T[s_r,t_{r'}]$ is unimodal, and $s_r$ and $t_{r'}$ are not related.

Conversely, assume that $s_r$ and $t_{r'}$ are not related. Then 
$T[s_r, t_{r'}]$ is unimodal. Therefore, the first arc of this path is converging, and thus, it must be $e^+_r$ and its ultimate arc is diverging and thus must be $e^-_{r'}$.
Hence $r$ interferes on $r'$.

\medskip
\ref{it:unimodal}
Let $r$ and $r'$ be two unimodal requests.

Assume first that $s_r$ and $t_{r'}$ are not related, then $T[s_r, t_r']$ is a directed path going through their least common ancestor and with first arc $e^+_{r}$ and ultimate arc $e^-_{r'}$. 
Hence $r$ interferes on $r'$.

Similarly, if $s_{r'}$ and $t_r$ are not related, then $r'$ interferes on $r$.

Assume now that $s_r$ and $t_{r'}$ are related, and 
$s_{r'}$ and $t_r$ are related. Then necessarily $m_r=m_{r'}$. Moreover, one easily sees that $r$ and $r'$ do not interfere.

\medskip
\ref{it:uni-conv}
Let $r$ be a unimodal request and $r'$ a converging request.
If $m_r$ and $t_{r'}$ are not related then by \ref{it:mixte} $T[m_r, t_r]$ and $r'$ interfere, and thus $r$ and $r'$ interfere.

Henceforth, we may assume that $s_r$ and $t_{r'}$ are related.
Then $T[s_{r'}, t_{r}]$ is either a converging path or diverging path, and thus does not contain either the converging arc $e^+_{r'}$ or the diverging arc $e^-_{r}$. Thus  $r'$ does not interfere on $r$.

If $t_{r'}$ is an ancestor of $m_r$ and $t_{r'}\neq m_r$, then 
$T[s_r, t_{r'}]$ is a directed path with first arc $e^+_r$ and ultimate arc $e^-_r$. Therefore $r$ interferes on $r'$.
If $m_r$ is an ancestor of $t_{r'}$, then $T[s_r, t_{r'}]$ is a unimodal path and thus does not contain the converging arc $e^-_{r'}$.
Hence $r$ and $r'$ do not interfere.

\ref{it:uni-div} Similar to \ref{it:uni-conv} by directional duality.
\end{proof}

Since any set of pairwise related nodes in a tree must lie on a common branch, we immediately derive from Lemma~\ref{lem:interference}\ref{it:converging} and Lemma~\ref{lem:interference}\ref{it:diverging} the following.

\begin{corollary}\label{lem:clique-branch}
Let $T$ be a bidirected tree rooted at $z$.

\begin{enumerate}[label={\rm(\roman*)}]
\item If $W$ is a clique of converging requests, then there is an in-branch of $T$ containing the reception arcs of all requests in $W$.
\item If $W$ is a clique of diverging requests, then there is an out-branch of $T$ containing the emission arcs of all requests in $W$.
\end{enumerate}

\end{corollary}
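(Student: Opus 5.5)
The plan is to reduce both parts to a purely order-theoretic fact about rooted trees: a set of pairwise related vertices is a chain in the ancestor order, so it lies on a single branch. I would handle part (i) in detail and then obtain part (ii) by directional duality, exactly as in Lemma~\ref{lem:interference}.

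For (i), let $W$ be a clique of converging requests. For every $r\in W$, the reception arc is $e^-_r=(t^-_r,t_r)$, and since $r$ is converging this arc is directed towards $z$, so $t_r$ is the parent of $t^-_r$. Since any two requests $r,r'\in W$ interfere, Lemma~\ref{lem:interference}\ref{it:converging} says that $t^-_r$ and $t^-_{r'}$ are related. Hence the set $X=\{t^-_r : r\in W\}$ consists of pairwise related vertices.

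Next I would show that $X$ is totally ordered by the ancestor relation. Relatedness of every pair means every pair is comparable, so $X$ has a deepest element $x$ (one that is a descendant of all the others). All elements of $X$ are then ancestors of $x$, and so they lie on $T[z,x]$. Now choose any leaf $\ell$ that is a descendant of $x$, and take the in-branch $P=T[\ell,z]$. Every $t^-_r$ lies on $P$ and differs from $z$, because $z$ has no parent while $t^-_r$ does. The arc of $P$ leaving $t^-_r$ is the arc from $t^-_r$ to its parent, which is precisely $e^-_r$. Therefore $P$ contains every reception arc, which proves (i).

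For (ii), I would run the same argument with Lemma~\ref{lem:interference}\ref{it:diverging}. The emission arcs $e^+_r=(s_r,s^+_r)$ of diverging requests go from a parent to a child, so the heads $s^+_r$ are pairwise related. Taking $x$ to be the deepest of these heads and $\ell$ a leaf below $x$, the out-branch $T[z,\ell]$ enters each $s^+_r$ through the arc from its parent, which is $e^+_r$.

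There is no real obstacle here. The only points needing care are that the relevant endpoints ($t^-_r$ and $s^+_r$) are never the root, and that the branch we choose actually traverses the arc adjacent to each chain element rather than merely passing through the vertex. Both follow from the arc orientations noted above.
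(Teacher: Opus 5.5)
Your proof is correct and follows the paper's route. The paper derives the corollary in one line from Lemma~\ref{lem:interference}\ref{it:converging} and~\ref{it:diverging}, together with the fact that pairwise related vertices lie on a common branch. You additionally spell out why the chosen branch actually contains the arcs $e^-_r$ (resp.\ $e^+_r$) and not merely the vertices $t^-_r$ (resp.\ $s^+_r$), a detail the paper leaves implicit.
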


A \emph{comparability graph} is a graph for which there exists a partial order such that two vertices are linked by an edge if and only if they are comparable in the partial order.  
A \emph{cobipartite graph} is the complement of a bipartite graph.
In other words, a cobipartite is a graph whose vertex set can be partitioned into two cliques.

Comparability and cobipartite graphs are perfect~\cite{Golumbic1980Ch5}. 
Hence $\omega(G) = \chi(G)$ for every comparability or cobipartite graph $G$.
Moreover the clique number, the independence number and the chromatic number can be computed in polynomial time for comparability and cobipartite graphs. 

Let $G_i$, $i\in I$, be graphs.
The \emph{disjoint union} of the $G_i$, $i\in I$, denoted by $\biguplus_{i\in I} G_i$, is the union of vertex-disjoint copies of each $G_i$.
The \emph{join} of the $G_i$, $i\in I$, denoted by $\bigoplus_{i\in I} G_i$, is the graph obtained from $\biguplus_{i\in I} G_i$ by adding all possible edges between (the copy of) $G_i$ and (the copy of) $G_j$ for all $i,j\in I, i\neq j$.

The following proposition is easy and well-known.
\begin{proposition}\label{prop:comp}
\begin{enumerate}[label={\rm(\roman*)}]
    \item The disjoint union and the join of comparability graphs are comparability graphs.
    \item  The join of cobipartite graphs is a cobipartite graphs.
\end{enumerate}
\end{proposition}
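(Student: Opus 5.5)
The plan is to argue directly from the definitions, treating comparability graphs through transitive orientations and cobipartite graphs through a partition of the vertex set into two cliques. Throughout, let $G_i$, $i\in I$, be graphs. For each $i$, let $P_i$ be a partial order on $V(G_i)$ whose comparability graph is $G_i$ (for part (i)), or let $V(G_i)=A_i\cup B_i$ with $A_i$ and $B_i$ cliques of $G_i$ (for part (ii)).

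For the disjoint union in (i), we take the partial order on $\biguplus_{i\in I} V(G_i)$ that is the union of the $P_i$. Elements of distinct copies are declared incomparable.
\begin{itemize}
\item Reflexivity and antisymmetry hold trivially.
\item For transitivity, a chain $x\le y\le z$ must have $x,y$ in the same copy and $y,z$ in the same copy. Hence all three lie in a single copy, and transitivity is inherited from the corresponding $P_i$.
\item Two vertices are comparable if and only if they lie in a common $G_i$ and are comparable in $P_i$, that is, if and only if they are adjacent in the disjoint union.
\end{itemize}

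For the join in (i), we fix an arbitrary linear order $\prec$ on the index set $I$. We set $x<y$ if either $x,y$ lie in the same copy $G_i$ and $x<_{P_i} y$, or $x\in G_i$, $y\in G_j$ and $i\prec j$. This is a lexicographic (ordinal) sum of the $P_i$ along the chain $(I,\prec)$.
\begin{itemize}
\item For transitivity, we check the cases for the copies containing $x$, $y$, $z$. If all three lie in one copy, we use transitivity of that $P_i$. Otherwise the relation is determined by the indices, and transitivity of $\prec$ gives the result.
\item Antisymmetry follows because $\prec$ is strict between distinct copies.
\item Two vertices in distinct copies are always comparable, which matches the complete bipartite edges added by the join. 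Inside a copy, comparability coincides with $P_i$, and hence with adjacency in $G_i$.
\end{itemize}

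For (ii), we take the sets $A=\bigcup_{i\in I} A_i$ and $B=\bigcup_{i\in I} B_i$, which partition the vertex set of $\bigoplus_{i\in I} G_i$. We check that $A$ is a clique.
\begin{itemize}
\item Two vertices of $A$ in the same copy are adjacent because $A_i$ is a clique.
\item Two vertices of $A$ in different copies are adjacent by the definition of the join.
\end{itemize}
The same argument shows that $B$ is a clique, so the join is cobipartite.

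The only step that needs care is verifying transitivity of the ordinal-sum order in the join case, which is a short case check. None of the steps presents a real obstacle; note that (ii) cannot be extended to disjoint unions, since the disjoint union of three cliques is not cobipartite.
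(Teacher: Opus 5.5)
Your proof is correct. The paper gives no proof of this proposition (it calls it ``easy and well-known''), and your argument is the standard verification: the union of the orders $P_i$ for the disjoint union, the ordinal sum along a linear order of $I$ for the join, and the two cliques $\bigcup_{i\in I} A_i$ and $\bigcup_{i\in I} B_i$ for the join of cobipartite graphs.

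One case in your transitivity check for the join is not literally ``determined by the indices'': $x,z$ in one copy $G_i$ and $y$ in another copy $G_j$. This case is vacuous, because it would force $i\prec j\prec i$.
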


\begin{corollary}\label{cor:conv-comp}
Let $R$ be a set of requests on a rooted bidirected tree $T$. We have:
\begin{enumerate}[label={\rm(\roman*)}]
 \item $\cI(R^-,T)$ is a comparability graph, so $\omega(R^-,T)=\chi(R^-,T)$. \label{it:conv-comp:comparability-minus}
 \item $\cI(R^+,T)$ is a comparability graph, so $\omega(R^+,T)=\chi(R^+,T)$. \label{it:conv-comp:comparability-plus}
 \item $\cI(R^\vee,T)$ is a cobipartite graph, so $\omega(R^\vee,T)=\chi(R^\vee,T)$. \label{it:conv-comp:cobipartite}
\end{enumerate}
\end{corollary}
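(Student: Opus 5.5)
For \ref{it:conv-comp:comparability-minus}, I will define an explicit partial order on $R^-$ and use Lemma~\ref{lem:interference}\ref{it:converging}. Say $r \preceq r'$ if $t^-_{r'}$ is an ancestor of $t^-_r$. The ancestor relation on vertices of $T$ is a partial order, and the plan is to pull it back to $R^-$. The difficulty is that distinct requests may share the same penultimate vertex $t^-_r$, so this is only a preorder on requests.

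Such requests do interfere: the vertices are equal, hence related. So I will group the requests of $R^-$ into classes according to the value of $t^-_r$. Each class is a clique, and two classes are either completely joined or completely non-adjacent, depending on whether their representative vertices are related.

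A clean way to handle this is to break ties by a fixed linear order on the requests. Define $r \prec r'$ if either $t^-_{r'}$ is a strict ancestor of $t^-_r$, or $t^-_r = t^-_{r'}$ and $r$ precedes $r'$ in the fixed linear order. This is the lexicographic product of the ancestor order with a linear order on each fibre, so it is a strict partial order; transitivity is a quick case check. By construction, two distinct requests are comparable exactly when their $t^-$ vertices are related. By Lemma~\ref{lem:interference}\ref{it:converging}, this is exactly when they interfere, so $\cI(R^-,T)$ is the comparability graph of $\prec$. Equivalently, one can argue that it is obtained from the comparability graph on the vertex set $\{t^-_r\}$ by substituting cliques, which preserves comparability. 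Perfection then gives $\omega(R^-,T)=\chi(R^-,T)$.

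Part \ref{it:conv-comp:comparability-plus} is identical, using $s^+_r$ and Lemma~\ref{lem:interference}\ref{it:diverging}; alternatively, it follows by directional duality (reversing all arcs swaps $R^-$ and $R^+$).

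For \ref{it:conv-comp:cobipartite}, I will use Lemma~\ref{lem:interference}\ref{it:unimodal}. Unimodal requests with different middles always interfere. So $\cI(R^\vee,T)$ is the join, over the vertices $m$ of $T$, of the graphs $G_m=\cI(\{r\in R^\vee : m_r=m\},T)$, and by Proposition~\ref{prop:comp} it suffices to show that each $G_m$ is cobipartite.

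Fix $m$. Each request $r$ with $m_r=m$ enters $m$ from a child $c^-_r = m^-_r$ and leaves towards a different child $c^+_r = m^+_r$. Vertices in different child subtrees of $m$ are unrelated. Moreover, $s_r$ and $t_{r'}$ lie strictly below $m$, so they are related only if $c^-_r = c^+_{r'}$. Hence non-interference of $r$ and $r'$ requires $c^-_r = c^+_{r'}$ and $c^-_{r'} = c^+_r$, that is, $r'$ uses the same pair of children in the opposite direction.

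So the non-edges of $G_m$ lie only between the set $A_{\{a,b\}}$ of requests going from $a$ to $b$ and the set $B_{\{a,b\}}$ of requests going from $b$ to $a$, for some unordered pair $\{a,b\}$ of children. Each of $A_{\{a,b\}}$ and $B_{\{a,b\}}$ is a clique, so the subgraph on $A_{\{a,b\}}\cup B_{\{a,b\}}$ is cobipartite. All other pairs of requests are adjacent, so $G_m$ is the join of these cobipartite pieces, which is again cobipartite by Proposition~\ref{prop:comp}. Perfection of cobipartite graphs then yields $\omega(R^\vee,T)=\chi(R^\vee,T)$.

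The main care point is the tie-breaking in \ref{it:conv-comp:comparability-minus}, so that the relation is genuinely a partial order and not only a preorder. A second point is checking, in \ref{it:conv-comp:cobipartite}, that the claimed relatedness of $s_r$ and $t_{r'}$ only within a common child subtree is correct; this holds because $m$ is excluded as an endpoint, since requests in $R^\vee$ are unimodal and strictly pass through their middle.
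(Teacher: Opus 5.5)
Your proof is correct and follows essentially the same route as the paper. Parts (i) and (ii) come straight from Lemma~\ref{lem:interference}\ref{it:converging}--\ref{it:diverging}; your tie-breaking just makes explicit what the paper leaves implicit when it says ``follows directly''. For (iii) you use, as the paper does, Lemma~\ref{lem:interference}\ref{it:unimodal} to write $\cI(R^\vee,T)$ as a join over middles and then apply Proposition~\ref{prop:comp}. The only cosmetic difference is how the two cliques of each middle class are found: the paper indexes the sons of the middle and splits requests according to whether $i^-_r<i^+_r$ or $i^-_r>i^+_r$. That is exactly the bipartition produced by your join of the pieces $A_{\{a,b\}}\cup B_{\{a,b\}}$.
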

\begin{proof}
\ref{it:conv-comp:comparability-minus} follows directly from Lemma~\ref{lem:interference}\ref{it:converging}.

\medskip

\ref{it:conv-comp:comparability-plus} follows directly from Lemma~\ref{lem:interference}\ref{it:diverging}.

\medskip

\ref{it:conv-comp:cobipartite} Let $z$ be the root of $T$.

Let $x$ be a vertex of $T$. We denote by $R_x^\vee$ the set of unimodal requests of $R$ with middle $x$.
Let $S=\{s_1, \dots , s_p\}$ be the set of {\it sons} of $x$, that are the neighbours of $x$ which are not in $T[z,x]$.
Observe that for every request in $R_x^\vee$, $m^-_r$ and $m^+_r$ are distinct vertices in $S$. So let $i^-_r$ be the index such that $m^-_r=s_{i^-_r}$ and let $i^+_r$ be the index such that $m^+_r=s_{i^+_r}$.

Let $A_x$ (resp. $B_x$) be the set of requests of $R_x^\vee$ such that $i^-_r< i^+_r$ (resp. $i^-_r > i^+_r$). Clearly, $(A_x,B_x)$ is a partition of $R^\vee_x$. Moreover, by Lemma~\ref{lem:interference}\ref{it:unimodal}, $A_x$ and $B_x$ are cliques in $\cI(R^\vee_x,T)$.
Hence $\cI(R^\vee_x,T)$ is a cobipartite graph.

Now, by Lemma~\ref{lem:interference}\ref{it:unimodal},
$\cI(R^\vee,T)$ is the join of the  $\cI(R^\vee_x,T)$, $x\in V(T)$.
Thus, by Proposition~\ref{prop:comp}, $\cI(R^\vee,T)$ is a cobipartite graph.
\end{proof}

\begin{lemma}
\label{lem:nospan}
Let $r^-$, $r^+$, and $ r^\vee$ be a converging, a diverging, and a unimodal request, respectively, on a rooted bidirected tree $T$.
Then two requests of  $\{r^-,r^+,r^{\vee}\}$ interfere.
\end{lemma}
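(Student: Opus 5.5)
We argue by contradiction: assume the three requests $r^-$, $r^+$ and $r^\vee$ pairwise do not interfere, and read off from Lemma~\ref{lem:interference} what the non-interference conditions force. By Lemma~\ref{lem:interference}\ref{it:uni-conv}, the pair $r^\vee,r^-$ does not interfere only if two conditions hold:
\begin{itemize}
\item[(a)] $m_{r^\vee}$ is an ancestor of $t_{r^-}$;
\item[(b)] $t_{r^\vee}$ and $s_{r^-}$ are related.
\end{itemize}
By Lemma~\ref{lem:interference}\ref{it:uni-div}, the pair $r^\vee,r^+$ does not interfere only if:
\begin{itemize}
\item[(c)] $m_{r^\vee}$ is an ancestor of $s_{r^+}$;
\item[(d)] $s_{r^\vee}$ and $t_{r^+}$ are related.
\end{itemize}
By Lemma~\ref{lem:interference}\ref{it:mixte}, the pair $r^-,r^+$ does not interfere only if:
\begin{itemize}
\item[(e)] $s_{r^-}$ and $t_{r^+}$ are related.
\end{itemize}
The goal is to show that (a)--(e) cannot hold simultaneously.

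Write $m=m_{r^\vee}$. Since $r^\vee$ is unimodal, $s_{r^\vee}$ and $t_{r^\vee}$ are proper descendants of $m$ lying in the subtrees of two distinct sons $a=m^-_{r^\vee}$ and $b=m^+_{r^\vee}$ of $m$. Since $r^-$ is converging, $s_{r^-}$ is a proper descendant of $t_{r^-}$. By (a), $t_{r^-}$ is a descendant of $m$, so $s_{r^-}$ is a proper descendant of $m$, hence in the subtree of some son $c$ of $m$. Dually, $r^+$ is diverging, so $t_{r^+}$ is a proper descendant of $s_{r^+}$, and by (c), $t_{r^+}$ lies in the subtree of some son $d$ of $m$.

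The key observation is that two proper descendants of $m$ in the subtrees of distinct sons of $m$ are not related, since their common ancestor is $m$ itself. We can therefore translate the relatedness conditions into equalities of sons:
\begin{itemize}
\item (b) gives $c=b$;
\item (d) gives $d=a$;
\item (e) gives $c=d$.
\end{itemize}
Hence $a=b$, contradicting the fact that $a\neq b$ for a unimodal request. So two of the three requests interfere.

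The main care point is justifying that $s_{r^-}$ and $t_{r^+}$ are \emph{proper} descendants of $m$, so that each sits in the subtree of a well-defined son. This is where (a) and (c) are used, together with the requests having length at least one. The remaining steps are direct applications of Lemma~\ref{lem:interference}.
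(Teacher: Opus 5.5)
Your proof is correct and follows essentially the same route as the paper. Both arguments extract from Lemma~\ref{lem:interference}\ref{it:uni-conv} and \ref{it:uni-div} that $s_{r^-}$ lies below $m^+_{r^\vee}$ and $t_{r^+}$ lies below $m^-_{r^\vee}$, and conclude that $r^-$ and $r^+$ interfere. The differences are only presentational: you close via Lemma~\ref{lem:interference}\ref{it:mixte}, using that $s_{r^-}$ and $t_{r^+}$ are unrelated, whereas the paper directly exhibits the unimodal path $T[s_{r^-},t_{r^+}]$ with first arc $e^+_{r^-}$ and last arc $e^-_{r^+}$, and your son-by-son bookkeeping spells out a step the paper leaves implicit.
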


\begin{proof}
If $r^\vee$ interferes with none of $r^-, r^+$, then by Lemma~\ref{lem:interference}~\ref{it:uni-conv} and~\ref{it:uni-div}, $m_{r^\vee}$ is an ancestor of both $t_{r^-}$ and $s_{r^+}$, $t_{r^\vee}$ and $s_{r^-}$ are related, and $s_{r^\vee}$ and $t_{r^+}$ are related.

Then $T[s_{r^-}, t_{r^+}]$ is a unimodal path with middle $m^r$, first arc $e^+_{r^-}$, and ultimate arc $e^-_{r^+}$.
Hence $r^-$ interferes on $r^+$.
\end{proof}

\begin{lemma}
\label{lem:colour-size-1}
Let $R$ be a set of requests on a rooted bidirected tree $T$, and $I \subseteq R$ a set of independent requests. Then, either $I$ contains exactly two unimodal requests, or $|I\cap (R^+\cup R^\vee)|\le 1$, or $|I\cap (R^-\cup R^\vee)|\le 1$.
\end{lemma}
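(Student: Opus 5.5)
We argue by contradiction on the composition of $I$. Write $a=|I\cap R^-|$, $b=|I\cap R^+|$, $c=|I\cap R^\vee|$, and suppose $c\neq 2$, $b+c\ge 2$ and $a+c\ge 2$. We derive a contradiction by splitting according to the value of $c$, using Lemma~\ref{lem:nospan} and Corollary~\ref{cor:conv-comp}\ref{it:conv-comp:cobipartite}.

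First, $c\le 2$. By Corollary~\ref{cor:conv-comp}\ref{it:conv-comp:cobipartite}, $\cI(R^\vee,T)$ is cobipartite, so its vertex set is covered by two cliques. An independent set meets each clique at most once, hence contains at most two unimodal requests. Since $c\neq 2$, we have $c\in\{0,1\}$.

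If $c=1$, the conditions $b+c\ge 2$ and $a+c\ge 2$ give $b\ge 1$ and $a\ge 1$. Then $I$ contains a converging, a diverging and a unimodal request. By Lemma~\ref{lem:nospan}, two of them interfere, contradicting the independence of $I$.

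If $c=0$, we have $a\ge 2$ and $b\ge 2$. Pick two converging requests $r_1,r_2$ and two diverging requests $q_1,q_2$ in $I$. By Lemma~\ref{lem:interference}\ref{it:mixte}, independence of each mixed pair means that $s_{r_i}$ and $t_{q_j}$ are related for all $i,j$. By Lemma~\ref{lem:interference}\ref{it:converging}, $t^-_{r_1}$ and $t^-_{r_2}$ are not related, and by Lemma~\ref{lem:interference}\ref{it:diverging}, $s^+_{q_1}$ and $s^+_{q_2}$ are not related. This case is the main obstacle, and the plan is to show that four such relatedness conditions cannot hold simultaneously.

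Since $t^-_{r_1}$ is an ancestor of $s_{r_1}$, and $t^-_{r_2}$ of $s_{r_2}$, and these are unrelated, $s_{r_1}$ and $s_{r_2}$ lie in disjoint subtrees (those of $t^-_{r_1}$ and $t^-_{r_2}$). Likewise $t_{q_1}$ and $t_{q_2}$ lie in the disjoint subtrees rooted at $s^+_{q_1}$ and $s^+_{q_2}$.

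Now $t_{q_1}$ is related to both $s_{r_1}$ and $s_{r_2}$, which are unrelated, so $t_{q_1}$ is a common ancestor of both. Hence $t_{q_1}$ is an ancestor of $u=\lca(s_{r_1},s_{r_2})$, and the same holds for $t_{q_2}$. Both $t_{q_1}$ and $t_{q_2}$ are therefore ancestors of $u$, so they are related. But they lie in disjoint subtrees, so they are unrelated. This is a contradiction.

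The case analysis is complete, which proves the statement. One should double-check the degenerate case where some of these vertices coincide; this is prevented because $t_{q_1}$ lies in the subtree of $s^+_{q_1}$ while $t_{q_2}$ lies in the disjoint subtree of $s^+_{q_2}$.
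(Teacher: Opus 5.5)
Your proof is correct. For $c=1$ you use Lemma~\ref{lem:nospan}, and for $c=0$ a relatedness contradiction, exactly as the paper does.

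Your $c=0$ case is actually the more careful version. The paper writes that $t_{r_1},t_{r_2}$ and $s_{r_3},s_{r_4}$ are unrelated, but Lemma~\ref{lem:interference}\ref{it:converging},\ref{it:diverging} do not give this. For instance, two converging requests entering one vertex from two different children are independent yet end at the same vertex. You instead correctly pass from the unrelated pairs $t^-_{r_1},t^-_{r_2}$ and $s^+_{q_1},s^+_{q_2}$ to their descendants $s_{r_i}$ and $t_{q_j}$.

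The genuine difference is the case of several unimodal requests. You bound $c\le 2$ via the cobipartiteness of $\cI(R^\vee,T)$ and stop, since $c=2$ is the first alternative of the statement. The paper instead notes that two independent unimodal requests $r,r'$ satisfy $m_r=m_{r'}$, $m^+_r=m^-_{r'}$ and $m^-_r=m^+_{r'}$. It deduces that every other request interferes with one of them, so $I=\{r,r'\}$.

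Your route is quicker but yields only the literal statement. The stronger form is what the paper uses later: in the proof of Theorem~\ref{thm:indep}, where an independent set of size at least $3$ must have $|\wI|-1$ requests of one directed type, and for the unimodal colour classes in Theorem~\ref{th:FPT}.

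You can recover the stronger form in two lines. By Lemma~\ref{lem:interference}\ref{it:uni-conv}, a converging $r''$ independent of both $r$ and $r'$ would have $s_{r''}$ strictly below $m_r$ and related to both $t_r$ and $t_{r'}$. But these lie in the disjoint subtrees of $m^+_r$ and $m^-_r$, a contradiction. The diverging case is dual, and a third unimodal request is already excluded by your bound.
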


\begin{proof}
If $I$ contains two unimodal requests $r,r'$, then by Lemma~\ref{lem:interference}\ref{it:unimodal}, $r$ and $r'$ have arcs in opposite direction (e.g. $(m_r,m^+_r)$ and $(m^-_{r'},m_{r'})$ since $m_r=m_{r'}$ and $m^+_r=m^-_{r'}$). Therefore $\{r,r'\}$ interfere with all the other requests. Hence $I=\{r,r'\}$.

If $I$ contains exactly one unimodal request, the result follows from Lemma~\ref{lem:nospan}.

Henceforth, we may assume $I\cap R^\vee = \varnothing$.

Let $I^-\coloneqq I\cap R^-$ and $I^+\coloneqq I\cap R^+$ ; note that $I^-\cap I^+=\varnothing$.
Assume for the sake of contradiction that $I^-$ contains two requests $r_1$ and $r_2$, and $I^+$ contains two requests $r_3$ and $r_4$.
By Lemma~\ref{lem:interference}\ref{it:converging}, $t_{r_1}$ and $t_{r_2}$ are not related in $T$, and by Lemma~\ref{lem:interference}\ref{it:diverging}, $s_{r_3}$ and $s_{r_4}$ are not related in $T$. 
Moreover, by Lemma~\ref{lem:interference}\ref{it:mixte}, $s_{r_3}$ is related with $t_{r_1}$ and $t_{r_2}$, so it is a common ancestor of both. The same holds for $s_{r_4}$, which contradicts the fact that $s_{r_3}$ and $s_{r_4}$ are not related.
\end{proof}

\section{Independence and clique numbers of interference graphs}\label{sec:inde-clique}

In this section, we prove that the independence number (Section~\ref{sec:independence}) and the clique number (Section~\ref{sec:clique}) of any interference graph can be computed in polynomial time.

\subsection[alternative section title to avoid latex warnings]{Computing $\bm{\alpha(R,T)}$} \label{sec:independence}

\begin{thm}\label{thm:indep}
    Given a set $R$ of requests in a bidirected tree $T$, there is an algorithm that returns a maximum independent set of $\cI(R,T)$ in $\bigO(|R|\log |R|+|T|)$ time.
\end{thm}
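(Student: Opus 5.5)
Our plan is to use Lemma~\ref{lem:colour-size-1} to cut the search for a maximum independent set into a few structured cases, each of which is an antichain problem on a rooted tree. We root $T$ at an arbitrary vertex $z$, which takes $\bigO(|T|)$ time. We then compute, for every vertex, its depth and its DFS entry/exit times $\tin,\tout$, so that ancestry tests take constant time. Next we classify each request as converging, diverging or unimodal and record $s_r, t_r, s_r^+, t_r^-, m_r$. Lemma~\ref{lem:colour-size-1} says that an independent set $I$ has one of three forms: (a) exactly two unimodal requests; (b) at most one request from $R^+\cup R^\vee$, the rest converging; (c) at most one request from $R^-\cup R^\vee$, the rest diverging. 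We compute the best set of each form and return the largest.

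Form (a) has size $2$, so it only matters when the other forms give at most $1$. By Lemma~\ref{lem:interference}\ref{it:unimodal}, we detect it by bucketing unimodal requests by middle and child pair, and looking for $r,r'$ with $m_r=m_{r'}$, $m_r^+=m_{r'}^-$ and $m_r^-=m_{r'}^+$ whose endpoint conditions hold. If only a set of size $2$ matters, a coarse check suffices: bucket by the unordered pair of children and test relatedness using $\tin/\tout$ on candidate extremes. This costs $\bigO(|R|\log|R|)$ after sorting.

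For form (b), Lemma~\ref{lem:interference}\ref{it:converging} says the converging requests in $I$ are exactly those whose vertices $t_r^-$ are pairwise unrelated. So $I\cap R^-$ corresponds to an antichain of marked vertices in the rooted tree. Without the extra request, the answer is a maximum antichain of marked tree vertices, counted with multiplicity of requests sharing the same $t_r^-$; at most one request per vertex can be taken, since equal vertices are related. We compute this bottom-up as $f(v)=\max(\,[v \text{ marked}],\ \sum_{c} f(c))$ in $\bigO(|T|)$ time.

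The main obstacle is adding the one extra request $q\in R^+\cup R^\vee$. For $q$ diverging, Lemma~\ref{lem:interference}\ref{it:mixte} requires every chosen converging $r$ to have $s_r$ related to $t_q$. For $q$ unimodal, Lemma~\ref{lem:interference}\ref{it:uni-conv} requires $m_q$ to be an ancestor of $t_r$ and $t_q$ related to $s_r$. In both cases the allowed converging requests are those lying on the branch through $t_q$ (or through $t_q$ and below $m_q$). On a single branch, pairwise-unrelated sets have size at most $1$ unless requests hang off the branch; I expect these constraints to force $|I\cap R^-|$ to be tiny, or structured along one root-to-$t_q$ path. I would first try to prove that the converging requests compatible with $q$ have $t_r^-$ in the subtree structure determined by $t_q$ with $s_r$ an ancestor or descendant of $t_q$. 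This would reduce the count to an antichain restricted to the path from $z$ to $t_q$ plus the subtree of $t_q$, which can be precomputed for all $q$ by one DFS with prefix data, giving $\bigO(|R|\log|R|+|T|)$ overall once endpoints are sorted by $\tin$. Form (c) follows from form (b) by directional duality, by applying it to the converse.
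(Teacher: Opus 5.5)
There is a genuine gap. Your reduction via Lemma~\ref{lem:colour-size-1} and your bottom-up antichain computation $f$ for $R^-$ are fine, but the proposal stops at exactly the step that carries the proof.

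For form (b) you never determine the best independent set containing a given $q\in R^+\cup R^\vee$. You only record expectations, and these are partly false. The compatible converging requests do not lie on the branch through $t_q$: any $r$ with $t_r^-$ strictly below $t_q$ is compatible, so they fill the whole subtree of $t_q$. Also, $|I\cap R^-|$ can be as large as $\alpha(R^-)$. What you would actually need is the following trichotomy.
\begin{enumerate}
\item Converging requests with $t_r^-$ unrelated to $t_q$ are incompatible with $q$.
\item Those with $t_r^-$ a proper descendant of $t_q$ are all compatible. For unimodal $q$ this holds because $t_r$ is then automatically below $m_q$.
\item Those with $t_r^-$ an ancestor of $t_q$ are pairwise related, and related to every request in the second group.
\end{enumerate}
Hence the optimum for a fixed $q$ is $1+\max\bigl(\sum_c f(c),\varepsilon_q\bigr)$. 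Here $c$ ranges over the children of $t_q$, and $\varepsilon_q\in\{0,1\}$ indicates a compatible converging $r$ with $t_r^-$ an ancestor of $t_q$. The $\varepsilon_q$ involve $s_r$ being related to $t_q$ and, for unimodal $q$, a depth condition relative to $m_q$. Computing all of them within the time bound needs a concrete data structure, which ``one DFS with prefix data'' does not provide.

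The paper avoids per-$q$ work with an idea absent from your plan. Take $I\subseteq R^-$ independent with $|I|\ge 2$. Then $I\cup\{q\}$ is independent if and only if $t_q$ is an ancestor of $\lca\{s_r : r\in I\}$; this is the dual of its claim for diverging sets. Choosing one request per deepest marked vertex $t_r^-$ gives a maximum independent set whose $\lca$ is as deep as possible. The reason is that every element of any other maximum antichain is an ancestor of exactly one of these vertices. So a single scan of $R^+\cup R^\vee$ decides whether $\alpha(R^-)+1$ is attained. In all remaining cases $\alpha(R,T)\le 2$, and these are settled by an explicit search for a non-interfering pair.

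Your form (a) is part of that pair search, and it is the second gap. Within a bucket you must find $r,r'$ with $s_r,t_{r'}$ related \emph{and} $t_r,s_{r'}$ related. These are two simultaneous conditions in two different subtrees, and relatedness is not monotone, so no ``candidate extreme'' per bucket suffices. The paper splits this into the ancestor/descendant combinations. It answers them with $\tin/\tout$ intervals, $2$-dimensional range trees, and a sweep over the Euler tour with a $1$-dimensional range tree. That is where the $\bigO(|R|\log|R|)$ bound comes from.
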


\begin{proof}
    Let us root $T$ at some (arbitrary) vertex $z$, and let $\le_T$ be the ancestor relation in $T$, i.e. $x \le_T y$ for $x,y\in V(T)$ whenever $x$ is an ancestor of $y$ in $T$. As is well-known, this relation is reflexive, antisymmetric, and transitive.
    
    Let $R^+ = r_1, \ldots, r_n$ be the diverging requests in $R$ ordered with a postfix depth-first search (DFS) according to their emission arcs (so, if $s^+_{r_i}$ is an ancestor of $s^+_{r_j}$ in $T$, we must have $i\ge j$). 
    Let $S^+ \subseteq V(T)$ be the set of nodes $x\in V(T)$ such that there is a diverging request $r\in R^+$ with $x=s^+_r$. 

    \begin{claim}
    \label{claim:mis-R+}
        Let $I \subseteq R^+$ be an independent set of diverging requests. If there is a request $r\in I$ and a request $r'\in R^+\setminus I$ such that $s^+_r$ is an ancestor of $s^+_{r'}$ in $T$, then $(I\setminus \{r\})\cup \{r'\}$ is also an independent set of diverging requests.
    \end{claim}

    \begin{subproof}
        By transitivity of $\le_T$ and \Cref{lem:interference}\ref{it:diverging}, if there is a request $r'' \in R^+$ that interferes with $r'$, then $r''$ also interferes with $r$. Hence $r'' \notin I\setminus \{r\}$, from which we infer that $(I\setminus \{r\})\cup \{r'\}$ is indeed an independent set.
    \end{subproof}

    
    By \Cref{claim:mis-R+}, one naturally obtains a maximum independent set $I^+ \subseteq R^+$ of diverging requests by taking one request $r\in R^+$ such that $s^+_r = x_0$ for each minimal element (with respect to $\le_T$) in $S^+$. 

    We construct a maximum independent set $I^- \subseteq R^-$ of converging requests with a symmetric approach, i.e. we let $R^- = q_1, \ldots, q_m$ be the converging requests in $R$ ordered with a postfix depth-first search (DFS) according to their reception arcs (so, if $t^-_{r_i}$ is an ancestor of $t^-_{r_j}$ in $T$, we must have $i\ge j$), and $S^- \subseteq V(T)$ be the set of nodes $x\in V(T)$ such that there is a converging request $r\in R^-$ with $x=t^-_r$.
    
    So far, we have $|I^+|=\alpha(R^+,T)$ and $|I^-|=\alpha(R^-,T)$.

    \begin{claim}
    \label{claim:extend-is}
        Let $I \subseteq R^+$ be an independent set of diverging requests of size at least $2$, and let $S \coloneqq \sset{t_r}{r\in I}$. Let $x_0 \coloneqq \lca(S)$ be the least common ancestor of $S$, and let $r \in R^-\cup R^\vee$. Then $I\cup \{r\}$ is an independent set if and only if $s_r$ is an ancestor of $x_0$ in $T$. 
    \end{claim}

    \begin{subproof}
        If $s_r$ is an ancestor of $x_0$ in $T$, the fact that $I\cup \{r\}$ is an independent set follows from \Cref{lem:interference}\ref{it:mixte} if $r\in R^-$ or from  \Cref{lem:interference}\ref{it:uni-div} if $r\in R^\vee$. 

        Conversely, if $s_r$ is not an ancestor of $x_0$ in $T$, then there is a request $r'\in I$ such that $s_r$ and $t_{r'}$ are not related in $T$, hence $r$ and $r'$ interfere, so  $I\cup \{r\}$ is not an independent set.
    \end{subproof}
    
    If 
    there is a request $r^- \in R^-\cup R^\vee$ such that $s_{r^-}$ is an ancestor of $\lca(\sset{t_r}{r\in I^+})$ in $T$,  then we add it to $I^+$.
    We proceed symetrically for $I^-$; if
    there is a request $r^+ \in R^+\cup R^\vee$ such that $t_{r^+}$ is an ancestor of $\lca(\sset{s_r}{r\in I^-})$ in $T$, then we add it to $I^-$.
    
    We let $I$ be the larger of $I^-, I^+$. If $|I|\ge 2$, then we return $I$, otherwise we look for a pair of requests $r,r' \in R$ such that $r$ and $r'$ do not interfere and return $I \coloneqq \{r,r'\}$ if we find one. 
    If nothing has been returned so far, we conclude that $\alpha(R)=1$ (assuming $R\neq\emptyset$), and we may return $I \coloneqq \{r\}$ for any request $r\in R$. 

    \begin{claim}
        The returned independent set $I$ has size $\alpha(R,T)$.
    \end{claim}

    \begin{subproof}If $\alpha(R,T) \le 2$ or $\alpha(R,T) = \max \big\{ \alpha(R^+,T), \alpha(R^-,T) \big\}$ it is clear that the above algorithm returns a maximum independent set of $\cI(R,T)$. 
    Let us now assume that $\alpha(R,T) > \max \big\{2,  \alpha(R^+,T), \alpha(R^-,T) \big\}$, and let $\wI$ be a maximum independent set of $\cI(R,T)$. 
    From \Cref{lem:colour-size-1}, we infer that $\wI$ either contains $|\wI|-1$ diverging requests, or $|\wI|-1$ converging requests. 
    By symmetry, let us assume that $\wI$ contains $|\wI|-1$ diverging requests, which implies that  
    \[ \alpha(R,T) = 1 + \alpha(R^+,T), \]
    and so in particular $\alpha(R^+,T)\ge 2$.

    If $r^-$ does not exist, then by \Cref{claim:extend-is} there is no maximum independent set of $\cI(R^+,T)$ which can be extended into a larger independent set of $\cI(R,T)$, a contradiction ($\wI \cap R^+$ is such an independent set). We conclude that $r^-$ exists, so at the end of the algorithm $|I^+|=1+\alpha(R^+,T) = \alpha(R,T)$. Hence the returned independent set is maximum in $\cI(R,T)$. 
    The case $|\wI \cap R^-| = |\wI|-1$ is obtained by symmetry.        
    \end{subproof}

    \begin{claim}
        The above algorithm can be implemented so that it runs in $\bigO(|R|\log |R|+|T|)$ time.
    \end{claim}

    \begin{subproof}
        On can construct $S^+$ and $S^-$ in $\bigO(|R|+|T|)$ time by performing a DFS of $T$, and then initialize $I^+$ and $I^-$ in $\bigO(|R|)$ time. 
        It is standard that $\lca$ queries can be done in constant time in $T$ after a $\bigO(|T|)$ time preprocessing, and so computing $\lca(\sset{t_r}{r\in I^+})$ can be done in $\bigO(|R|+|T|)$ time. 

        We are done unless $|I^+|=|I^-|=1$ at the end of the algorithm, in which case we need to look for a pair $\{r,r'\}$ of independent requests efficiently.
        This can trivially be done in $\bigO(|R|^2)$ time by testing all pairs of requests. In the following we show how to perform this task in $\bigO(|R|\log |R| + |T|)$ time.
        To do so, we will need to perform a lot of ancestor queries in $T$; to make those efficiently, we translate them into range queries as follows.
        Let $E =(v_1, \ldots, v_m)$ be an Eulerian tour of the bidirected tree $T$, starting (and terminating) at the root $z$. From $E$, one can construct two arrays $\tin$ and $\tout$ such that, for each $x\in V(T)$, $\tin[x] = \min \sset{i}{v_i = x}$ and $\tout[x] = \max \sset{j}{v_j = x}$.
        Observe that $(v_{\tin[x]}, \ldots , v_{\tout[x]})$ is an Eulerian tour of the bidirected subtree $T[x]$ of $T$ rooted in $x$. In particular, $x$ is an ancestor of $y$ in $T$ if and only if $\tin[y] \in [\tin[x], \tout[x]]$ (which can be checked in constant time).

        There are several possible combinations for the types of requests $r$ and $r'$ among unimodal, converging, or diverging, and each of them needs to be handled separately because it has its own interference condition (described in \Cref{lem:interference}).
        \paragraph{Unimodal/unimodal pairs} To detect such an independent pair, we begin by partitioning $R^\vee$ into equivalence classes for the relation $\sim$ defined by $r \sim r'$ if $m_r = m_{r'}$. This can be performed in (expected) time $\bigO(|R|)$ by relying on a hashtable whose keys are $\sset{m_r}{r\in R^\vee}$ (or in deterministic time $\bigO(|R|\log |R|)$ with a dictionary relying on a balanced binary search tree). 
        Now, in each equivalence class $X$, we seek for a pair $r,r'\in X$ such that $s_r$ and $t_{r'}$ as well as $t_r$ and $s_{r'}$ are related in $T$. Up to symmetry and up to switching the roles of $r$ and $r'$, there are two cases to consider.
        \begin{enumerate}[label=(\roman*)]
        \item \label{it:containment} We first seek for a pair $r,r'\in X$ such that $s_r \le_T t_{r'}$ ($s_r$ is an ancestor of $t_{r'}$ in $T$) and $t_r \le_T s_{r'}$. To do so, for each $r\in X$, we seek a request $r'\in X\setminus \{r\}$ such that 
        \[ \begin{cases}
        \tin[t_{r'}] \in [\tin[s_r],\tout[s_r]]; \mbox{and}\\
        \tin[s_{r'}] \in [\tin[t_r],\tout[t_r]].
        \end{cases}\]
        We are therefore performing $|X|$ $2$-dimensional range queries. Using a $2$-dimensional range tree substructure, this can be done in $\bigO(|X|\log |X|)$ time (see \cite[Theorem 5.11]{BCKO08}).
        
        Note that, by switching the roles of $r$ and $r'$, this also covers the pairs $(r,r')$ with $s_{r'} \le_T t_r$ and $t_{r'} \le_T s_r$.
        
        \item \label{it:overlap} We then seek for a pair $r,r'\in X$ such that $s_r \le_T t_{r'}$ and $s_{r'} \le_T t_r$.
        Leveraging on \Cref{cor:conv-comp}\ref{it:conv-comp:cobipartite}, $\cI(X,T)$ is a cobipartite graph, whose parts $A,B$ can be computed in linear time. We can therefore restrict ourselves to pairs $(r,r') \in A\times B$ and $(r,r')\in B\times A$ (the latter are treated similary by symmetry).
        Let us respectively sort the requests $r\in A$ and $r'\in B$ with a prefix DFS ordering with respect to $s_r$ and $t_{r'}$ in $T$ (equivalently, we sort them with increasing $\tin[s_r]$ and $\tin[t_{r'}]$). To find an independent pair $(r,r')\in A\times B$, we proceed as follows.

        \begin{itemize}
            \item Let $D$ be an empty $1$-dimensional range tree (i.e. a balanced binary search tree). At every step of the algorithm, $D$ will contain a subset of requests $r\in A$ whose node $s_r$ all lie on a common branch of $T$, each associated with the key $\tin[t_r]$. 
            \item 
            For each $i\in \{1, \ldots, |E|\}$ (recall that $E$ is the Eulerian tour of $T$): 
            \begin{itemize}
                \item add to $D$ all requests $r\in A$ such that $\tin[s_r]=i$ with the key $\tin[t_r]$;
                \item remove from $D$ all requests $r\in A$ such that $\tout[s_r]=i-1$;
                \item For each request $r'\in B$ such that $\tin[t_{r'}]=i$, test in time $\bigO(\log |D|)$ whether there is a request $r\in D$ with $\tin[t_r] \in [\tin[s_{r'}], \tout[s_{r'}]]$. If so, this means that $s_{r'} \le_T t_r$, and because $r\in D$ we have $\tin[t_{r'}]=i \in [\tin[s_r],\tout[s_r]]$, i.e. $s_r\le_T t_{r'}$. Hence we can stop the search and return the pair $(r,r')$.
            \end{itemize}
        \end{itemize}
        
        The above procedure runs in $\bigO(|B|\log |A| + |T|)$ time, and its symmetric (which switches the roles of $A$ and $B$) runs in $\bigO(|A|\log |B| + |T|)$ time.


        Note that seeking for a pair $(r,r')$ such that $t_r \le_T s_{r'}$ and $t_{r'} \le_T s_r$ is symmetric (by switching the roles of $s_r$ and $t_r$) and can therefore be done similarly.
        \end{enumerate}

        \paragraph{Unimodal/diverging pairs} For each unimodal request $r\in R^\vee$ we seek for a request $r'\in R^+$ such that $m_r \le_T s_{r'}$ and either $s_r \le_T t_{r'}$ or $t_{r'}\le_T s_r$. The first condition translates to 
            \[ \begin{cases}
            \tin[s_{r'}] \in [\tin[m_r],\tout[m_r]]; \mbox{and}\\
            \tin[t_{r'}] \in [\tin[s_r],\tout[s_r]],
            \end{cases}\]
            and is treated similarly to case \ref{it:containment} of the unimodal/unimodal pairs. 
            The second translates to 
            \[ \begin{cases}
            \tin[s_{r'}] \in [\tin[m_r],\tout[m_r]]; \mbox{and}\\
            \tin[s_r] \in [\tin[t_{r'}],\tout[t_{r'}]],
            \end{cases}\]
            is treated similarly to case \ref{it:overlap} of the unimodal/unimodal pairs. 
            
            \paragraph{Unimodal/converging pairs} This is symmetric to the previous case, and is treated similarly.
            
            \paragraph{Diverging/converging pairs}  For each $r\in R^+$, we seek a request $r'\in R^-$ such that $t_r \le_T s_{r'}$ (i.e. $\tin[s_{r'}] \in [\tin[t_r],\tout[t_r]]$), and for each $r\in R^-$, we seek a request $r'\in R^+$ such that $s_r \le_T t_{r'}$ (i.e. $\tin[t_{r'}] \in [\tin[s_r],\tout[s_r]]$). This can be done with range queries over a $1$-dimensional range tree, for a total cost of $\bigO(|R|\log |R|)$.

        \bigskip
        The overall cost of all these tests is in $\bigO(|R|\log |R| + |T|)$ time.
    \end{subproof}
    Combining the above claims, there is a $\bigO(|R|\log|R|+|T|)$ time implementation of an algorithm that returns a maximum independent set of $\cI(R,T)$ when given $(R,T)$ as an input, as desired.
\end{proof}


    

\subsection[alternative section title to avoid latex warnings]{Computing $\bm{\omega(R,T)}$} \label{sec:clique}

\begin{thm}
\label{thm:clique}
Given a set $R$ of requests on a bidirected tree $T$, one can compute the clique number $\omega(R,T)$ of $\cI(R,T)$ in $\bigO(|R|^{4.5})$ time.
\end{thm}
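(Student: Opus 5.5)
We plan to root $T$ at a vertex $z$ and split a maximum clique $W$ into $W^-=W\cap R^-$, $W^+=W\cap R^+$ and $W^\vee=W\cap R^\vee$. The point of the $\bigO(|R|^{4.5})$ bound is to guess a polynomial number of structural parameters of $W$ (about $|R|^2$ choices). For each guess, the remaining optimisation will be a maximum clique in a cobipartite graph. That is a maximum independent set in a bipartite graph, hence a maximum matching, computable by Hopcroft--Karp in $\bigO(|R|^{2.5})$ time.

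\textbf{Structure of the two monotone parts.} By Corollary~\ref{lem:clique-branch}, the reception arcs of $W^-$ lie on one in-branch and the emission arcs of $W^+$ on one out-branch. Let $u$ be the deepest vertex $t^-_r$ over $r\in W^-$ and $v$ the deepest $s^+_r$ over $r\in W^+$. Then $W^-$ is contained in the set $C^-_u$ of converging requests $r$ with $t^-_r$ an ancestor of $u$, and $C^-_u$ is a clique by Lemma~\ref{lem:interference}\ref{it:converging} since ancestors of $u$ are pairwise related. Symmetrically, $W^+\subseteq C^+_v$, which is also a clique. We will therefore guess the pair $(u,v)$; there are $\bigO(|T|^2)=\bigO(|R|^2)$ choices. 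For each guess we consider $X_{u,v}=C^-_u\cup C^+_v\cup R^\vee$ and aim to show that the maximum clique of $\cI(X_{u,v},T)$ can be computed by matching. Every clique $W$ is contained in $X_{u,v}$ for the correct guess, and conversely every clique of $\cI(X_{u,v},T)$ is a clique of $\cI(R,T)$, so $\omega(R,T)$ is the maximum over all guesses.

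\textbf{Reduction to a cobipartite or near-cobipartite graph.} In $\cI(X_{u,v},T)$, both $C^-_u$ and $C^+_v$ are cliques. By Corollary~\ref{cor:conv-comp}\ref{it:conv-comp:cobipartite}, $R^\vee$ is covered by two cliques $A=\bigcup_x A_x$ and $B=\bigcup_x B_x$. This gives a cover by four cliques, which is not directly a matching problem. To handle it, we will also guess whether $W^\vee$ is empty, and if not, its middle. By Lemma~\ref{lem:interference}\ref{it:unimodal}, unimodal requests with distinct middles always interfere, so $W^\vee$ may use several middles. However, Lemma~\ref{lem:nospan} and Lemma~\ref{lem:colour-size-1} restrict non-adjacencies. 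For each guess we will pair the four cliques into two sides, namely $C^-_u\cup A$ versus $C^+_v\cup B$ (and the other pairing). We then argue, via Lemma~\ref{lem:interference}\ref{it:uni-conv},\ref{it:uni-div}, that after restricting to requests compatible with the guessed parameters, the two sides are cliques. Concretely, the extra guessed data would be the deepest $t_r$ over $W^-$ and the deepest $s_r$ over $W^+$; this keeps the count at $\bigO(|R|^2)$ with the bounds adjusted. Then the maximum clique in the complement of a bipartite graph equals the number of vertices minus a maximum matching (by König's theorem), computed in $\bigO(|R|^{2.5})$. The total time is $\bigO(|R|^2\cdot|R|^{2.5})=\bigO(|R|^{4.5})$.

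\textbf{Main obstacle.} The hard step is the previous one: showing that, with a polynomial amount of guessed information, the candidate set splits into two cliques. The difficulty is the mixed converging--diverging adjacencies, whose condition (Lemma~\ref{lem:interference}\ref{it:mixte}: $s_r$ and $t_{r'}$ unrelated) is not transitive. I would first try guessing $u$ and $v$, as above, and check that $C^-_u\cup A$ is a clique once requests with $s_r$ in a bad position are discarded. If that fails, I would instead treat $C^-_u\cup C^+_v$ as one side and $R^\vee$ as the other. In that version, $C^-_u\cup C^+_v$ is the complement of a bipartite graph between two cliques, and $R^\vee$ is itself cobipartite. I would then handle the unimodal part by guessing a single middle $x$, using the join structure to restrict the unimodal requests, so that $R^\vee$ reduces to a single $A_x\cup B_x$ plus other middles that are complete to everything compatible.
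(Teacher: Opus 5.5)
\textbf{There is a genuine gap: the step that reduces the problem to a cobipartite graph is missing.} Your outer scheme is right and matches the paper's accounting: $\bigO(|R|^2)$ guesses, each a maximum clique in a cobipartite graph, i.e.\ $n$ minus a maximum matching of the bipartite complement. Your first step ($W^-\subseteq C^-_u$, $W^+\subseteq C^+_v$, both cliques) is also correct.

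But the argument stops exactly where the work is. After guessing $(u,v)$ you only know that $\cI(X_{u,v},T)$ is covered by the four cliques $C^-_u$, $C^+_v$, $A$, $B$. A maximum clique in such a graph is not a matching problem, since its complement is merely $4$-partite. Your proposed repair, pairing $C^-_u\cup A$ against $C^+_v\cup B$ (or the other pairing), fails however the sons are ordered. Take a root $z$ with children $a,a'$, a leaf $b$ below $a$ and a leaf $b'$ below $a'$. Let $r'=(b,a)\in C^-_b$, $r''=(a',b')\in C^+_{b'}$ and $r_0=(b',a',z,a,b)$. By Lemma~\ref{lem:interference}\ref{it:uni-conv} and~\ref{it:uni-div}, $r_0$ interferes with neither $r'$ nor $r''$. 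So whichever of $A,B$ contains $r_0$, each pairing puts $r_0$ on a side together with $r'$ or with $r''$, and that side is not a clique. Nothing about $s_{r_0}$ is ``in a bad position'', and such requests cannot simply be discarded, since they may belong to a maximum clique.

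Your alternative split ($C^-_u\cup C^+_v$ against $R^\vee$) leaves the same monotone--unimodal non-edges uncontrolled, and the claim that unimodal requests at other middles are complete to everything compatible is unproved. Finally, guessing further vertices (a middle, a deepest $t_r$, a deepest $s_r$) multiplies the number of guesses by up to $|R|$ each. The assertion that the count stays $\bigO(|R|^2)$ is therefore unsupported, and with it the $\bigO(|R|^{4.5})$ bound.

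\textbf{The missing idea.} Guess a bough rather than two vertices, and sort requests by whether, and in which direction, they share an arc with it, not by their type with respect to a fixed root. Let $Q$ be a bough with top vertex $x$ (its vertex closest to a fixed root).
\begin{itemize}
\item The requests sharing an arc with $Q$ induce a cobipartite graph, since two requests sharing an arc with the same directed path always interfere.
\item The unimodal requests of $T[x]$ that are arc-disjoint from $Q$ induce a cobipartite graph by Corollary~\ref{cor:conv-comp}\ref{it:conv-comp:cobipartite}.
\item By Lemma~\ref{lem:branch} the two sets are complete to each other, so the clique number of their union is the sum of two cobipartite clique numbers.
\end{itemize}
The structural fact you lack is Lemma~\ref{lem:clique}: every clique $W$ lies in such a union for some bough. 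That bough is built from the in-branch carrying the reception arcs of $W^-$ and the out-branch carrying the emission arcs of $W^+$, and the requests of $W$ arc-disjoint from it are forced to be unimodal in $T[x_W]$. In the example above, $r'$ and $r''$ fall in one direction-clique of the bough between $b$ and $b'$, and $r_0$ falls in the other, so the matching takes care of exactly those non-edges. There are $\bigO(|R|^2)$ boughs, which gives $\bigO(|R|^{4.5})$ with no further guessing.
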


In order to prove Theorem~\ref{thm:clique}, we need to establish a structural result on the cliques in an interference graph.
We begin with the following fact.

\begin{lemma}
\label{lem:branch}
Let $T$ be a rooted bidirected tree, and let $Q$ be a branch of $T$.
Let $r_0$ be a request that intersects $Q$, and let $r$ be a unimodal request arc-disjoint from $Q$. Then $r$ and $r_0$ interfere.
\end{lemma}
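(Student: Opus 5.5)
The plan is to split on the type of $r_0$ and, in each case, contradict the non-interference condition of Lemma~\ref{lem:interference}\ref{it:unimodal}, \ref{it:uni-conv} or~\ref{it:uni-div}. I read ``$r_0$ intersects $Q$'' as ``$r_0$ contains an arc of $Q$''; if vertex intersection is meant, the same argument should adapt with minor changes. Throughout, write $m=m_r$, and note that $c^-:=m^-_r$ and $c^+:=m^+_r$ are two distinct sons of $m$.

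First I would record the key structural fact. Let $T_m$ denote the subtree of descendants of $m$, and let $e$ be an arc of $Q$ that lies inside $T_m$. Then $m$ lies on $Q$, because $Q$ is a bidirected path from the root $z$ to a leaf. Moreover, $Q$ enters exactly one son $c$ of $m$, and the arc $e$ lies in the subtree $T_c$ together with the arcs between $m$ and $c$. Since $r$ is arc-disjoint from $Q$ and contains arcs between $m$ and both $c^-$ and $c^+$, we get $c\notin\{c^-,c^+\}$. In particular, $s_r\in T_{c^-}$ and $t_r\in T_{c^+}$, and no vertex of $T_c$ is related to $s_r$ or to $t_r$: any two vertices lying strictly below $m$ in different son-subtrees are unrelated.

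Now suppose, for contradiction, that $r$ and $r_0$ do not interfere.

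If $r_0$ is converging, Lemma~\ref{lem:interference}\ref{it:uni-conv} says that $m$ is an ancestor of $t_{r_0}$, so the whole converging path $r_0$ lies in $T_m$. It also says that $t_r$ and $s_{r_0}$ are related. Since $r_0$ contains an arc of $Q$ inside $T_m$, the structural fact applies. That arc lies in $T_c$ or between $m$ and $c$, and $r_0$ is converging, so its start $s_{r_0}$ is a descendant of $c$. Hence $s_{r_0}$ is not related to $t_r\in T_{c^+}$, a contradiction.

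The diverging case is symmetric via Lemma~\ref{lem:interference}\ref{it:uni-div}, now using that $s_r$ and $t_{r_0}$ must be related.

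If $r_0$ is unimodal, Lemma~\ref{lem:interference}\ref{it:unimodal} gives $m_{r_0}=m$. Then every arc of $r_0$ lies in $T_m$, so $r_0$ uses $c$ as $m^-_{r_0}$ or as $m^+_{r_0}$. In the first case, $s_{r_0}\in T_c$ is unrelated to $t_r$. In the second case, $t_{r_0}\in T_c$ is unrelated to $s_r$. Either way the lemma's condition fails.

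I expect the main obstacle to be the careful justification of the structural fact: that an arc of $Q$ inside $T_m$ forces $m\in Q$ and confines that arc to a single son-subtree $T_c$ with $c\neq c^\pm$. The other delicate point is the boundary situation where the arc of $r_0$ on $Q$ is one of the arcs between $m$ and $c$ itself. This case still gives $s_{r_0}$ (respectively $t_{r_0}$) in $T_c$, but it must be checked explicitly.
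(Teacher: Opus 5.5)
Your proof is correct and follows the paper's route: assume non-interference, read off the ancestor and relatedness conditions from Lemma~\ref{lem:interference}, and derive a clash with $r$ being arc-disjoint from $Q$ at $m_r$. The paper shortcuts your three-way split by passing to converses, so that $r_0$ meets $Q$ in a converging arc, and then cites only~\ref{it:uni-conv}; your separate treatment of a unimodal $r_0$ via~\ref{it:unimodal} fills in a case the paper passes over.

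One thing to drop is your hedge that a vertex-intersection reading would work with minor changes. Take $Q$ passing through $m$ and a third son $c_3$, with $r=(c_1,m,c_2)$ and $r_0=(m,c_1)$: these do not interfere although $r_0$ meets $Q$ in the vertex $m$, so the statement genuinely needs the arc reading you used.
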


\begin{proof} Free to consider the converses of $r$ and $r_0$, which interfere if and only if $r$ and $r_0$ interfere, we may assume that $r_0$ intersects $Q$ in a converging arc $(u,v)$. 

Assume for a contradiction that $r$ and $r_0$ do not interfere. Then, by Lemma~\ref{lem:interference}~(v), $m_r$ is an ancestor of $v$, and $t_r$ and $u$ are related. Hence $(m_r, m^+_r)$ is an arc in $Q$, a contradiction.
\end{proof}

\begin{lemma}
\label{lem:clique}
Let $R$ be a set of requests on a bidirected tree $T$, and let $W$ be a clique in the interference graph $\cI(R,T)$.
There exist a bough $Q_W$ of $T$ and a vertex $x_W\in V(Q_W)$  such that all requests in $W$ that are arc-disjoint from $Q_W$ are unimodal in $T[x_W]$.

Moreover, the set of pairs $(Q_W, x_W)$ over all cliques $W$ of $\cI(R,T)$ has only a quadratic size (in $|R|$).
\end{lemma}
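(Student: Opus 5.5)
The idea is to fix the root once and for all and read off the bough from Corollary~\ref{lem:clique-branch}. The vertex $x_W$ will then be the same for every clique, so counting the pairs reduces to counting boughs through a fixed vertex.

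Fix an arbitrary vertex $z$ of $T$ and set $x_W \coloneqq z$ for every clique $W$; we work in the rooted tree $T[z]$. Partition $W$ into $W^-$, $W^+$ and $W^\vee$, its converging, diverging and unimodal requests with respect to $T[z]$. The requests of $W^\vee$ are unimodal in $T[x_W]$ by definition, so nothing needs to be shown for them. It therefore suffices to find a bough $Q_W$ that meets every request of $W^-\cup W^+$ in at least one arc.

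\begin{itemize}
\item \emph{Converging part.} $W^-$ is itself a clique of converging requests. By Corollary~\ref{lem:clique-branch}(i), there is an in-branch of $T[z]$ containing all reception arcs $e^-_r$, $r\in W^-$. Let $B_1$ be the corresponding bidirected branch, from $z$ to a leaf $\ell_1$.
\item \emph{Diverging part.} By Corollary~\ref{lem:clique-branch}(ii), there is an out-branch containing all emission arcs $e^+_r$, $r\in W^+$. Let $B_2$ be the corresponding branch, from $z$ to a leaf $\ell_2$.
\item \emph{Empty parts.} If $W^-$ (resp. $W^+$) is empty, choose $B_1$ (resp. $B_2$) arbitrarily.
\end{itemize}

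Next we build the bough. If the first arcs of $B_1$ and $B_2$ leave $z$ towards different neighbours, then $B_1\cup B_2$ is exactly the bough $T\langle \ell_1,\ell_2\rangle$, and it passes through $z$. Otherwise $B_1\cup B_2$ lies inside the bidirected subtree hanging from a single neighbour of $z$. In that case it is not a path in general, and this is the step that needs care.

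The fix is to replace $B_2$. If $W^+=\varnothing$ or $W^-=\varnothing$, one branch can be chosen freely, so take $Q_W$ to be the bough formed by the nonempty-side branch together with any branch of $T[z]$ leaving $z$ through a different neighbour. If $z$ is a leaf, every branch of $T[z]$ is already a bough. In general I expect to need a small exchange argument here; alternatively, choose $z$ to be a leaf at the start, so that every branch is a bough and the degenerate case disappears. Rooting at a leaf also resolves the overlap: one needs a single bough containing both $B_1$ and $B_2$, and I would use the interference between $W^-$ and $W^+$ (Lemma~\ref{lem:interference}\ref{it:mixte}) to show that $B_1$ and $B_2$ can be chosen to coincide, or to be nested, whenever both sets are nonempty. \textbf{This reconciliation of the two branches is the main obstacle.}

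For the counting statement, once $x_W=z$ is fixed, each $Q_W$ is a bough, hence determined by its two end-leaves. By the reduction of Section~\ref{subsec:reducing}, $|V(T)|\le 2|R|+2$. So there are $\bigO(|R|^2)$ boughs, and the set of pairs $(Q_W,x_W)$ has quadratic size.
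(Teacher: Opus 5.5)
There is a genuine gap, and it sits exactly at the step you flagged. Fixing $x_W\coloneqq z$ for every clique cannot work, and the reconciliation you hope to get from Lemma~\ref{lem:interference}\ref{it:mixte} is false.

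\textbf{Counterexample.} Take any vertex $a\neq z$ with two children $b,c$ in $T[z]$; for instance, let $T$ be the star with centre $a$ and leaves $z,b,c$, rooted at the leaf $z$. Let $r^-=(b,a)$ and $r^+=(a,c)$.
\begin{itemize}
\item[(1)] The path $T[b,c]=(b,a,c)$ starts with $e^+_{r^-}$ and ends with $e^-_{r^+}$, so $W=\{r^-,r^+\}$ is a clique.
\item[(2)] $r^-$ is converging and $r^+$ is diverging in $T[z]$, so neither is unimodal in $T[z]$. With $x_W=z$, the bough $Q_W$ would therefore have to contain $z$ together with both edges $ab$ and $ac$.
\item[(3)] The path from $z$ reaches $a$ through a third edge, so such a bough would have three edges at $a$, which is impossible.
\end{itemize}
Lemma~\ref{lem:interference}\ref{it:mixte} is what pushes $B_1$ and $B_2$ apart here: it says $s_{r^-}$ and $t_{r^+}$ are \emph{unrelated}. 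Rooting at a leaf does not help either. It makes every two branches share their first edge, so the bad case becomes the generic one.

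\textbf{The missing idea: let $x_W$ depend on $W$.} When $W^-$ and $W^+$ are both nonempty, take $x_W$ to be the last common vertex of $B_1$ and $B_2$, and set $Q_W\coloneqq T\langle \ell_1,\ell_2\rangle$. This bough in general does not pass through $z$.
\begin{itemize}
\item[(a)] If all emission arcs of $W^+$ lie on the common part $B_1\cap B_2=T\langle z,x_W\rangle$, fall back to $x_W=z$ with a bough containing $B_1$. Symmetrically, if all reception arcs of $W^-$ lie there, use a bough containing $B_2$.
\item[(b)] Otherwise, pick witnesses: $r^-\in W^-$ with reception arc on $T\langle x_W,\ell_1\rangle$, and $r^+\in W^+$ with emission arc on $T\langle x_W,\ell_2\rangle$.
\item[(c)] A request of $W^-\cup W^+$ that is arc-disjoint from $Q_W$ has its reception or emission arc on $T\langle z,x_W\rangle$. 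Interference with $r^+$ or $r^-$, via items \ref{it:mixte}, \ref{it:uni-conv} and \ref{it:uni-div} of Lemma~\ref{lem:interference}, then rules out its being converging or diverging in $T[x_W]$.
\item[(d)] The requests of $W^\vee$ are no longer free once $x_W\neq z$. A request unimodal in $T[z]$ can be converging or diverging in $T[x_W]$, namely when one of its two halves runs along $T\langle z,x_W\rangle$ towards $x_W$. The same witnesses are needed to exclude this.
\end{itemize}

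\textbf{Counting.} This still works. In every case, $x_W$ is the vertex of $Q_W$ closest to the fixed root $z$, so it is determined by $Q_W$. Hence there are only $\bigO(|R|^2)$ pairs $(Q_W,x_W)$.
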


\begin{proof}
Let $x_0\in V(T)$ be chosen arbitrarily, and let $T_0\coloneqq T[x_0]$.
Let $W^+$, $W^-$, and $W^\vee$ be the set of diverging, converging, and unimodal requests of $W$ in $T_0$, respectively.
By Corollary~\ref{lem:clique-branch}, there is a branch $Q^+$ of $T_0$ such that the emission arc of every request in $W^+$ is contained in $Q^+$, and there is a branch $Q^-$ of $T_0$ such that  the reception arc of every request in $W^+$ is contained in $Q^-$. 
So the result holds when $W^-=\varnothing$ by taking $x_W=x_0$ and $Q_W$ any bough containing $Q^+$, and when $W^+=\varnothing$ by taking $x_W=x_0$ and $Q_W$ any bough containing $Q^-$.


We may now assume that both $W^+$ and $W^-$ are non-empty. Let $x_W$ be the ultimate vertex of $Q^-\cap Q^+$, that is the vertex $x_W$ such that $T\langle x_0, x_W\rangle=Q^-\cap Q^+$. 
Let $y^-$ (resp. $y^-$) be the leaf of $Q^-$ (resp. $Q^+$), and let $Q_W=T\langle y^-, y^+\rangle$. 

If all diverging requests have their emission arc in $Q^+ \cap Q^-$, we can take $x_W = x_0$ and $Q_W$ any bough containing $Q^-$, and the result holds because all requests in $W^+$ and $W^-$ intersect $Q_W$ and all of the other are in $W^\vee$ and are thus unimodal.
Similarly, we get the result if all converging requests have their reception arc in $Q^- \cap Q^+$.
Henceforth we may assume that there are a diverging request $r^+$ with emission arc in $T\langle x_W, y^+\rangle$ and a converging request $r^-$ with reception arc in $T\langle x_W, y^-\rangle$.


Let $r$ be a request in $W$ that is arc-disjoint from $Q_W$.\\
Assume that $r$ is converging (resp. diverging) in $T_0$. Then its reception (resp. emission) arc must be in $Q^+ \cap Q^-$ or it would not be arc-disjoint from $Q_W$. Moreover, $r$ cannot be diverging (resp. converging) in $T[x_W]$, for otherwise it would not interfere with $r^+$ (resp. $r^-$) and so not be in $W$. Thus $r$ is unimodal in $T[x_W]$.\\
Assume now that $r$ is unimodal in $T_0$. 
Suppose for a contradiction that $r$ is converging in $T[x_W]$. Then the ultimate arc of $r$ must belong to $T(x_0,x_W)$. Hence $r$ does not interfere with $r^-$, a contradiction.
Similarly, we get a contradiction if $r$ is diverging in $T[x_W]$.
Thus $r$ is unimodal in $T[x_W]$.

We have proved that all requests in $W$ arc-disjoint from $Q_W$ are unimodal in $T[x_W]$.

\medskip
Each bough is entirely defined by its two extremities; there are at most as many possible $Q_W$ as pairs of leaves in $\cI(R, T)$. On top of that, if $T$ is always rooted into the same vertex $x_0$, there is a unique choice for $x_W$ given $Q_W$ (this is the vertex of $Q_W$ closest to $x_0$), so the number of pairs $(Q_W, x_W)$ is only quadratic (in $|R|$).
\end{proof}

We are now ready to prove Theorem~\ref{thm:clique}.

\begin{proof}[Proof of Theorem~\ref{thm:clique}]
Let $x_0\in V(T)$ be chosen arbitrarily, and let $T_0\coloneqq T[x_0]$.


For every leaf $y_i \in V(T_0)$, let $Q_i$ be the branch from $x_0$ to $y_i$. For every pair $(i,j)$, we let $Q_{ij}\coloneqq T\langle y_i,y_j \rangle$ if $i\neq j$, and $Q_{ii}\coloneqq Q_i$.
We let $x_{ij}$ be the deepest vertex from $V(Q_i)\cap V(Q_j)$ in $T_0$. 
Let $R_{ij}$ be the set of requests from $R$ that intersect $Q_{ij}$, and let $R'_{ij}$ be the set of unimodal requests from $R$ in $T[x_{ij}]$ that are arc-disjoint from $Q_{ij}$.
Observe that each of the graphs $\cI(R_{ij},T)$ and $\cI(R'_{ij},T)$ is cobipartite. Hence, by K\H onig's Theorem, computing its clique number boils down to solving the maximum matching problem in the complement. This can be done in $\bigO(|R|^{5/2})$ using an algorithm of Hopcroft and Karp~\cite{HoKa1973}.

Let $W$ be a maximum clique of $\cI(R,T)$. 
By Lemma~\ref{lem:clique}, there exists a pair $\{i,j\}$ such that $W \subseteq R_{ij} \cup R'_{ij}$.
So one has $\omega(R,T) \le \omega(R_{ij},T) + \omega(R'_{ij},T)$. On the other hand, by Lemma~\ref{lem:branch}, every pair of requests $(r,r') \in R_{ij}\times R'_{ij}$ interfere, so the union of a clique in $\cI(R_{ij},T)$ and a clique in $\cI(R'_{ij},T)$ forms a clique in $\cI(R,T)$. As a conclusion,
\[ \omega(R,T) = \max_{i,j} \Big( \omega(R_{ij},T) + \omega(R'_{ij},T)\Big).\]

Since $T$ is a possibly reduced tree and so $|V(T)|\leq 2 |R| +2$ (see Subsection~\ref{subsec:reducing}), there are $\bigO(|R|^2)$ possible choices for $\{i,j\}$, so computing this maximum can be done in $\bigO(|R|^{4.5})$ time.
\end{proof}

\section[alternative section title to avoid latex warnings]{Approximating the chromatic number and $\bm{\chi}$-boundedness  of interference graphs}\label{sec:chi-bound}

From previous results, we can deduce the following bound on the chromatic number  $\chi(R,T)$ of $\cI(R,T)$.

\begin{lemma} \label{lem:chi-bound-3}
    Let $R$ be a set of requests on a bidirected tree $T$. We have  
\[ \chi(R,T) \le \omega(R^+,T)+\omega(R^-,T)+\omega(R^{\vee},T) \le 2\,\chi(R,T).\]
\end{lemma}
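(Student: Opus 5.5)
Fix a root $z$ of $T$ and partition $R$ into $(R^-,R^+,R^\vee)$. We prove the two inequalities separately.

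\textbf{Left inequality.} Colour each of the three classes with its own palette, the palettes being pairwise disjoint. By Corollary~\ref{cor:conv-comp}, $\chi(R^-,T)=\omega(R^-,T)$, $\chi(R^+,T)=\omega(R^+,T)$ and $\chi(R^\vee,T)=\omega(R^\vee,T)$. Hence $\cI(R,T)$ has a proper colouring with $\omega(R^+,T)+\omega(R^-,T)+\omega(R^\vee,T)$ colours, which gives the first inequality.

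\textbf{Right inequality.} It suffices to show that the sum of the three clique numbers is at most $2\omega(R,T)$, since $\omega(R,T)\le\chi(R,T)$. Equivalently, the three quantities $a=\omega(R^-,T)$, $b=\omega(R^+,T)$ and $c=\omega(R^\vee,T)$ satisfy $a+b+c\le 2\omega(R,T)$. I would try to show that any two of the three maximum cliques can be merged into a large clique of $\cI(R,T)$. Merging a whole pair into a single clique seems too strong, so I would aim instead for a colour-class argument.

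Take an optimal colouring of $\cI(R,T)$ with $\chi=\chi(R,T)$ colours and look at a colour class $I$. By Lemma~\ref{lem:colour-size-1}, one of three things holds. First, $I$ may consist of exactly two unimodal requests; then it meets $R^\vee$ in two requests and $R^\pm$ in none. Second, $|I\cap(R^+\cup R^\vee)|\le 1$. Third, $|I\cap(R^-\cup R^\vee)|\le 1$. Now fix maximum cliques $W^-$, $W^+$, $W^\vee$ of the respective subgraphs. Each clique meets each colour class at most once, so a colour class meets $W^-\cup W^+\cup W^\vee$ in at most one element per clique, that is, at most $3$ elements. Moreover, Lemma~\ref{lem:colour-size-1} forbids $I$ from meeting all three cliques:
\begin{itemize}
\item in the first case only $W^\vee$ can be met, since $I$ contains no converging or diverging request;
\item in the second case $I$ meets $W^+\cup W^\vee$ at most once, so it meets at most one of $W^+,W^\vee$ and hence at most two of the three cliques;
\item the third case is symmetric to the second.
\end{itemize}
So each colour class meets at most $2$ of the three cliques. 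Every request of $W^-\cup W^+\cup W^\vee$ lies in some colour class, and the three cliques are disjoint because the classes $R^-,R^+,R^\vee$ are. Therefore
\[a+b+c=|W^-|+|W^+|+|W^\vee|\le 2\chi(R,T).\]
This yields the right inequality directly, without going through $\omega(R,T)$.

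\textbf{Main obstacle.} The only delicate point is the case analysis from Lemma~\ref{lem:colour-size-1}, in particular checking that the case of two unimodal requests contributes at most $1$, and not $2$, to $W^\vee$. This holds simply because $W^\vee$ is a clique and the two requests in $I$ are independent. The rest is bookkeeping.
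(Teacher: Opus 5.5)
Your proof is correct and is essentially the paper's argument. The paper counts, for each of $R^-$, $R^+$, $R^\vee$, the colours of an optimal colouring that appear on that class, and bounds each count below by the corresponding clique number. It then uses Lemma~\ref{lem:nospan} (the fact underlying your appeal to Lemma~\ref{lem:colour-size-1}) to see that no colour appears on all three classes, which is the same double count as your clique-meeting argument.

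One remark: the preliminary target $\omega(R^-,T)+\omega(R^+,T)+\omega(R^\vee,T)\le 2\,\omega(R,T)$ that you first wrote down is false for an arbitrary root. The $C_5[K_t]$ example of Section~\ref{sec:conclusion}, rooted at the leaf $a$, gives $2t+2t+t=5t>4t$, so abandoning that target was necessary rather than merely convenient.
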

\begin{proof}
Since $R=R^-\cup R^+ \cup R^\vee$, we have  $\chi(R,T) \le \chi(R^-,T) + \chi(R^+,T) + \chi(R^\vee,T)$.
By Corollary~\ref{cor:conv-comp}, $\chi(R^-,T)=\omega(R^-,T)$, $\chi(R^+,T)=\omega(R^+,T)$, and $\chi(R^\vee,T)=\omega(R^\vee,T)$. Therefore, 
\begin{eqnarray}
\chi(R,T) & \le &  \omega(R^+,T)+\omega(R^-,T)+\omega(R^{\vee},T).\label{ineg}
\end{eqnarray}

Now, let $c$ be a $\chi(R,T)$-colouring of the set of requests $R$ on $T$. Let $k^+$, $k^-$, and $k^\vee$ be the number of colours of $c$ that appear on $R^+$, $R^-$, and $R^\vee$, respectively. By Lemma~\ref{lem:nospan}, a given colour cannot appear simultaneously on $R^+$, $R^-$, and $R^\vee$, hence $k^+ + k^- + k^\vee \le 2 \chi(R,T)$.
On the other hand, $k^-\ge \chi(R^-,T) = \omega(R^-,T)$, and similarly $k^+\ge \omega(R^+,T)$ and $k^\vee \ge \omega(R^{\vee},T)$.
Hence 
\begin{eqnarray}
\omega(R^+,T)+\omega(R^-,T)+\omega(R^{\vee},T) & \leq & 2\, \chi(R,T).\label{ineg2}
\end{eqnarray}
\end{proof}

\begin{corollary}\label{cor:approximate}
  There is a $\bigO(|R|^{5/2})$-time $2$-approximation algorithm for the chromatic number of a given interference graph   $\cI(R,T)$.   
\end{corollary}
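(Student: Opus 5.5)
The algorithm is the one suggested by Lemma~\ref{lem:chi-bound-3}. Root $T$ at an arbitrary vertex $z$ and partition $R$ into $R^-$, $R^+$, $R^\vee$ in linear time by locating the middle $m_r$ of each request. Colour each part optimally with its own palette, the three palettes pairwise disjoint, and output the union. This union is a proper colouring of $\cI(R,T)$, since two requests in different parts never share a colour and each part is properly coloured. By Corollary~\ref{cor:conv-comp} it uses exactly $\omega(R^+,T)+\omega(R^-,T)+\omega(R^\vee,T)$ colours, and Lemma~\ref{lem:chi-bound-3} bounds this by $2\,\chi(R,T)$. That is the approximation guarantee. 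It remains to show that each part can be optimally coloured within the claimed running time.

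For $R^-$, Lemma~\ref{lem:interference}\ref{it:converging} says $\cI(R^-,T)$ is the comparability graph of the ancestor order applied to the nodes $t^-_r$. Requests with the same $t^-_r$ are pairwise adjacent. I would colour greedily along a DFS: a request $r$ receives a colour equal to its rank among requests whose $t^-$ lies on the root-to-$t^-_r$ path, ordered by depth and with ties broken arbitrarily. Any two interfering requests lie on a common branch, so they get distinct ranks. The number of colours equals the largest number of requests whose reception arcs lie on one in-branch, which is $\omega(R^-,T)$ by Corollary~\ref{lem:clique-branch}. This is computed in $\bigO(|R|+|T|)$ time by a single DFS that maintains a counter. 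Diverging requests are handled symmetrically using $s^+_r$.

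For $R^\vee$, the proof of Corollary~\ref{cor:conv-comp}\ref{it:conv-comp:cobipartite} shows that $\cI(R^\vee,T)$ is the join of the graphs $\cI(R^\vee_x,T)$ over $x\in V(T)$. So an optimal colouring is obtained by colouring each $\cI(R^\vee_x,T)$ optimally with pairwise disjoint palettes. Each such graph is cobipartite with explicit cliques $A_x$ and $B_x$. An optimal colouring of a cobipartite graph pairs up vertices along a maximum matching $M$ of its complement: each matched pair shares a colour, and every other vertex gets its own colour, for $|R^\vee_x|-|M|$ colours in total. The complement is bipartite between $A_x$ and $B_x$, and its edges are the non-interfering pairs characterised in Lemma~\ref{lem:interference}\ref{it:unimodal}. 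Hopcroft--Karp~\cite{HoKa1973} finds $M$ in $\bigO(|R^\vee_x|^{5/2})$ time. Summing over $x$ gives $\bigO(|R|^{5/2})$, because $\sum_x |R^\vee_x|^{5/2}\le |R|^{5/2}$.

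The main obstacle is the running-time accounting. Building the complement graphs takes $\bigO(|R|^2)$ time, either directly from Lemma~\ref{lem:interference}\ref{it:unimodal} or via the reduction of Subsection~\ref{subsec:reducing}, and this is dominated by the Hopcroft--Karp term, so the total is $\bigO(|R|^{5/2})$. The correctness of the greedy colouring for the comparability parts is routine.
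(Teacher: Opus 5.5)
Your proposal is correct and matches the paper's proof in structure: you partition $R$ into $R^-$, $R^+$, $R^\vee$, colour each part optimally with pairwise disjoint palettes, and invoke Lemma~\ref{lem:chi-bound-3} for the factor $2$. The differences are confined to subroutines and leave the $\bigO(|R|^{5/2})$ bound unchanged. You colour the two comparability parts by a self-contained linear-time DFS rank assignment where the paper cites Hoàng's algorithm. You also run Hopcroft--Karp separately on each $\cI(R^\vee_x,T)$ rather than once on the whole cobipartite graph $\cI(R^\vee,T)$.
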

\begin{proof}
Pick a root $x$ for $T$. Colour $R^-$ with 
$\omega(R^-,T)$ colours, $R^+$ with $\omega(R^+,T)$ other colours, and $R^\vee$ with $\omega(R^{\vee},T)$ colours.
This is possible in $\bigO(|R|^{5/2})$ time, by using the algorithm of Hopcroft and Karp~\cite{HoKa1973} to colour $\cI(R^\vee, T)$ (which is a cobipartite graph), and the algorithm of Hoàng~\cite{Hoang94} to colour $\cI(R^-,T)$ and $\cI(R^+,T)$ (which are comparability graphs).   

We then obtain a colouring of  $\cI(R,T)$ with $\omega(R^+,T)+\omega(R^-,T)+\omega(R^{\vee},T) \leq  2\chi(R,T)$ colours.
\end{proof}

Clearly, each of $\omega(R^+,T)$, $\omega(R^-,T)$, and $\omega(R^{\vee},T)$ are at most $\omega(R,T)$. Thus Lemma~\ref{lem:chi-bound-3} yields $\chi(R,T) \le 3\,\omega(R,T)$. 
We now prove a better bound on 
$\chi(R,T)$ in terms of $\omega(R,T)$.

\begin{thm}\label{thm:chi-bound}
Let $R$ be a set of requests on a bidirected tree $T$. Then 
\[ \chi(R,T) \le 2\,\omega(R,T).\]
\end{thm}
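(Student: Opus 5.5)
The plan is to choose the root of $T$ cleverly, so that the three-class bound of Lemma~\ref{lem:chi-bound-3} can be sharpened to a two-term bound. The weak point of Lemma~\ref{lem:chi-bound-3} is that it pays separately for $R^+$, $R^-$ and $R^\vee$. I would therefore look for a root $z$ and a split of $R$ into two parts $R_1, R_2$ such that each part induces a perfect graph (a comparability or cobipartite graph) whose clique number is at most $\omega(R,T)$. Then
\[ \chi(R,T)\le \chi(R_1,T)+\chi(R_2,T)=\omega(R_1,T)+\omega(R_2,T)\le 2\,\omega(R,T). \]

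The natural target is to show that $\cI(R^-\cup R^\vee,T)$, or a slight modification of it, is perfect or at least satisfies $\chi\le\omega$, and similarly for $R^+\cup R^\vee$. The key steps are as follows.

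\begin{enumerate}
\item Root $T$ at a leaf $z$. Then the emission or reception arc at $z$ is the only way to get requests whose middle is $z$, so unimodal requests have middles at internal vertices only.
\item Relate the middle $m_r$ of each unimodal request $r$ to converging requests via Lemma~\ref{lem:interference}\ref{it:uni-conv}: $r$ and a converging request $r'$ fail to interfere only when $m_r$ is an ancestor of $t_{r'}$ and $t_r, s_{r'}$ are related.
\item Split the unimodal requests $R^\vee$ according to whether they are better charged to the converging side or to the diverging side. For instance, following the $(A_x,B_x)$ decomposition of Corollary~\ref{cor:conv-comp}\ref{it:conv-comp:cobipartite}, each $R^\vee_x$ is a union of two cliques, and one clique could go to each side.
\end{enumerate}

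Each side is then a converging family plus a union of cliques $A_x$, and I would try to prove that the resulting graph is a comparability graph. The cliques $A_x$ are pairwise completely joined across different $x$, by Lemma~\ref{lem:interference}\ref{it:unimodal}. A converging request interferes with most of them, namely all those with $m_r$ not an ancestor of $t_{r'}$. So one can try to extend the ancestor order on reception arcs by inserting each $A_x$ as a chain near $x$. The required transitivity would follow from the ``related'' conditions in Lemma~\ref{lem:interference}\ref{it:uni-conv}.

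The main obstacle is exactly this comparability claim. Lemma~\ref{lem:interference}\ref{it:uni-conv} also involves the condition that $t_r$ and $s_{r'}$ are related, which is not a pure ancestor condition on a single vertex, and it may break transitivity. If it fails, my fallback is a direct greedy or inductive colouring.

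\begin{itemize}
\item Process the vertices $x$ in decreasing depth. Colour $R^-$ and $R^+$ optimally with disjoint palettes of size at most $\omega(R^-,T)$ and $\omega(R^+,T)$.
\item Colour the unimodal requests $A_x\cup B_x$ by reusing colours: a colour of $R^-$ can be reused on $r\in R^\vee_x$ exactly when that colour class avoids the interference region of $r$. A colour class of $R^-$ whose reception arcs all lie below $x$, on the appropriate sides, is usable.
\item Bound the number of colours that cannot be reused by exhibiting a clique. A request $r\in R^\vee_x$ together with one representative from each unusable colour class should form a clique, by Lemma~\ref{lem:branch} combined with Lemma~\ref{lem:clique}.
\end{itemize}

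This would give at most $\omega(R,T)$ fresh colours beyond $\max(\omega(R^-,T),\omega(R^+,T))\le\omega(R,T)$, for a total of at most $2\,\omega(R,T)$.
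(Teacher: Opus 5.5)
There is a genuine gap: the key step of your primary route is false, and your fallback miscounts colours and misses the idea the paper's proof depends on, namely the choice of root.

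\textbf{The comparability claim fails.} Root $T$ at a leaf $z$ adjacent to $x$, let $x$ have children $w,y$, and let $y$ have children $p,q$. Take $u=(w,x,y,p)$, $u'=(w,x,y,q)$, $r_1=(p,y)$, $r_2=(y,x)$, $r_3=(q,y)$. By Lemma~\ref{lem:interference}, $u\,u'\,r_1\,r_2\,r_3$ is an induced $5$-cycle:
\begin{itemize}
\item $u$ misses $r_1$ and $r_2$, because $x$ is an ancestor of $t_{r_1}$ and $t_{r_2}$, and $t_u=p$ is related to $s_{r_1}=p$ and to $s_{r_2}=y$;
\item $u$ meets $r_3$, because $p$ and $q$ are unrelated;
\item symmetrically, $u'$ misses $r_2,r_3$ and meets $r_1$;
\item $u$ and $u'$ interfere, because $w$ and $q$ are unrelated;
\item $t^-_{r_1}=p$, $t^-_{r_2}=y$, $t^-_{r_3}=q$ give the induced path $r_1r_2r_3$.
\end{itemize}
Since $u$ and $u'$ have the same pair $(m^-,m^+)=(w,y)$, they lie in the same clique $A_x$ or $B_x$ for every ordering of the sons of $x$. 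The converse gadget hung below $w$ does the same on the diverging side: add children $p',q'$ to $w$ and take $v=(p',w,x,y)$, $v'=(q',w,x,y)$ with the diverging requests $(w,p'),(x,w),(w,q')$. This gives an induced $C_5$ with the same pair $(w,y)$ and three diverging requests. So whichever side receives that clique is not perfect. The equality $\chi(R_i,T)=\omega(R_i,T)$ your count needs is therefore unavailable.

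\textbf{The fallback miscounts.} With disjoint palettes, $R^-\cup R^+$ costs $\omega(R^-,T)+\omega(R^+,T)$ colours, not the maximum. For a leaf root this sum can already equal $2\,\omega(R,T)$. In the $C_5[K_t]$ example of Section~\ref{sec:conclusion} rooted at the leaf $a$, both $R^+=R_1\cup R_2$ and $R^-=R_3\cup R_4$ are cliques of size $2t=\omega(R,T)$. Bounding the fresh colours by $\omega(R,T)$ then only recovers the $3\,\omega$ bound of Lemma~\ref{lem:chi-bound-3}. That bound on fresh colours is trivial anyway, since $\omega(R^\vee,T)\le\omega(R,T)$. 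Your clique argument for it is also unsupported: representatives of distinct colour classes of $R^-$ need not interfere with each other.

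\textbf{What is missing.} The root must be chosen so that $\omega(R^+,T[x])+\omega(R^-,T[x])\le\omega(R,T)$. In the example above every leaf fails and only the centre $b$ works. This is Lemma~\ref{lem:root}, proved as follows:
\begin{itemize}
\item subdivide to a nice pair (Lemma~\ref{lem:nice});
\item take a directed path $Q$ meeting the most requests, and a directed path $Q'$ into $q_0$ meeting the most requests;
\item move the candidate root along $Q$ to the first $q_s$ with $m^+_s+p^-_s\le\omega(R,T)$;
\item use that all requests have length at least $2$ to get $m^-_s+p^+_s\le\omega(R,T)$;
\item bound $\omega(R^-,T[q_s])\le\max(m^-_s,p^-_s)$, and symmetrically for $R^+$, by extending the in-branch carrying a maximum converging clique with a segment of $Q$ or $Q'$ into a directed path.
\end{itemize}
With such a root, inequality~\eqref{ineg} together with $\omega(R^\vee,T[x])\le\omega(R,T)$ gives $\chi(R,T)\le 2\,\omega(R,T)$ directly, with no need to split $R^\vee$.
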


A {\it nice pair} $(R,T)$ is a set of requests in a tree such that
each request has length at least $2$. 
By subdividing each pair of opposite arcs (that is replacing the arcs $(u,v)$ and $(v,u)$ by the arcs $(u,w)$, $(w,v)$, $(v,w)$, $(w,u)$, where $w=w(\{u,v\})$ is a new vertex), one easily gets the following.

\begin{lemma}\label{lem:nice}
Let $R$ be a set of requests in a tree $T$.
There exists a nice pair $(\tilde{R}, \tilde{T})$ such that
${\mathcal I}(R, T)={\mathcal I}(\tilde{R}, \tilde{T})$.
\end{lemma}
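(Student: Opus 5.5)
We construct $(\tilde R,\tilde T)$ exactly as suggested: for every pair of opposite arcs $(u,v),(v,u)$ of $T$, we introduce a new vertex $w=w(\{u,v\})$ and replace the two arcs by $(u,w),(w,v),(v,w),(w,u)$. This yields a bidirected tree $\tilde T$ whose vertex set is $V(T)$ together with one new vertex per edge of the underlying tree. Each request $r=(v_0,\dots,v_\ell)$ is replaced by the directed path $\tilde r$ of $\tilde T$ obtained by inserting $w(\{v_{i-1},v_i\})$ between $v_{i-1}$ and $v_i$ for every $i$. Since $\ell\ge 1$, $\tilde r$ has length $2\ell\ge 2$, so $(\tilde R,\tilde T)$ is a nice pair. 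We take $\tilde R=\{\tilde r : r\in R\}$; the map $r\mapsto\tilde r$ is a bijection.

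The heart of the proof is to check that $r$ interferes on $r'$ in $T$ if and only if $\tilde r$ interferes on $\tilde r'$ in $\tilde T$. This gives $\dI(R,T)=\dI(\tilde R,\tilde T)$ under the bijection, and hence equality of the underlying graphs. The key observation is that directed paths of $T$ between original vertices correspond bijectively to directed paths of $\tilde T$ between the same vertices, via the same subdivision. In particular, $\tilde T[s_{\tilde r},t_{\tilde r'}]=\tilde T[s_r,t_{r'}]$ is the subdivision of $T[s_r,t_{r'}]$, because $s_{\tilde r}=s_r$ and $t_{\tilde r'}=t_{r'}$ are original vertices.

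Next we compare the first and last arcs. The emission arc of $\tilde r$ is $(s_r,w(\{s_r,s_r^+\}))$. In the subdivided path, the first arc is $(s_r,w(\{s_r,y\}))$, where $y$ is the second vertex of $T[s_r,t_{r'}]$. These two arcs coincide if and only if $y=s_r^+$, that is, if and only if the first arc of $T[s_r,t_{r'}]$ is $e_r^+$. Symmetrically, the reception arcs coincide if and only if the last arc of $T[s_r,t_{r'}]$ is $e^-_{r'}$.

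The degenerate case $s_r=t_{r'}$ needs separate attention. There the path is trivial in both trees and has no arcs, so interference fails in both settings. No step is a real obstacle; the only care needed is to state the path correspondence precisely.
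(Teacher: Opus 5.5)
Your proposal is correct and is exactly the paper's argument. The paper only asserts that subdividing every pair of opposite arcs works. You supply the verification it leaves implicit: the correspondence $\tilde T[s_r,t_{r'}]$ equals the subdivision of $T[s_r,t_{r'}]$, by uniqueness of paths in a tree; injectivity of $w$ lets you match first and last arcs; and you handle the trivial-path case.
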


\begin{lemma}
\label{lem:root}
Let $(R,T)$ be a nice pair. Then there exists a node $x\in V(T)$ such that 
\[ \omega(R^+,T[x])+\omega(R^-,T[x]) \le \omega(R,T).\]
\end{lemma}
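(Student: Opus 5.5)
Our plan is to choose the root by an extremal argument and then show that, for this root, a maximum clique of $\cI(R^+,T[x])$ and a maximum clique of $\cI(R^-,T[x])$ together form a clique of $\cI(R,T)$. For a node $x$, set $\Phi(x)=\omega(R^+,T[x])+\omega(R^-,T[x])$, where $R^\pm$ are taken with respect to the root $x$. We take $x$ minimising $\Phi$; if this is not enough to settle ties, we minimise lexicographically together with a secondary potential, for instance the number of requests that are not unimodal in $T[x]$. Let $W^-$ and $W^+$ be maximum cliques of converging and diverging requests in $T[x]$. By Corollary~\ref{lem:clique-branch}, the reception arcs of $W^-$ lie on an in-branch $Q^-$ and the emission arcs of $W^+$ lie on an out-branch $Q^+$. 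By Lemma~\ref{lem:interference}\ref{it:mixte}, $W^-\cup W^+$ is a clique exactly when, for every $r\in W^-$ and $r'\in W^+$, the vertices $s_r$ and $t_{r'}$ are not related in $T[x]$. If this holds, then $\Phi(x)=|W^-\cup W^+|\le\omega(R,T)$ and we are done.

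So suppose some $r\in W^-$ and $r'\in W^+$ do not interfere, so that $s_r$ and $t_{r'}$ are related; by symmetry (taking converses) assume $s_r$ is an ancestor of $t_{r'}$. Then $r$ climbs from $s_r$ towards $x$, and $r'$ leaves a vertex on the branch from $x$ and descends to below $s_r$. Hence the part of the branch $T[x,s_r]$ near $x$ is used both by converging arcs of $W^-$ and by diverging arcs of $W^+$. Let $y$ be the neighbour of $x$ on $T[x,s_r]$; this is a genuine step because $(R,T)$ is a nice pair, so every request has length at least $2$ and has an interior vertex. We re-root at $y$ and track how the classification changes: only requests using one of the arcs $xy$ or $yx$ change type. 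A converging request ending at $x$ through $yx$ becomes converging with the same or a shorter branch, or becomes unimodal with middle $y$; symmetrically for the diverging ones. Using Lemma~\ref{lem:interference}\ref{it:converging} and \ref{it:diverging}, we aim to show the following: any clique of converging (resp.\ diverging) requests in $T[y]$ is either already such a clique in $T[x]$, or loses at least the requests with reception (resp.\ emission) arc $yx$ (resp.\ $xy$). Moreover, the non-interfering pair $(r,r')$ certifies that at least one of these two cliques cannot be as large as before. This would give $\Phi(y)\le\Phi(x)$, with strict improvement in the secondary potential, contradicting the choice of $x$.

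The main obstacle is this monotonicity step. Re-rooting can create new converging requests (those formerly diverging through $xy$), and a new large converging clique might appear along a different in-branch in $T[y]$. We must rule this out using the fact that such requests, together with $W^+$, were pairwise interfering, so that by Lemma~\ref{lem:nospan} and Lemma~\ref{lem:colour-size-1} they cannot simultaneously enlarge both sides. If the potential argument proves too delicate, the fallback is to pick $x$ directly as the split vertex of $Q^-$ and $Q^+$ computed for the best pair of branches, in the spirit of the choice of $x_W$ in Lemma~\ref{lem:clique}. We would then verify, by the same case analysis as in that proof, that every non-interfering pair forces a request of $W^-\cup W^+$ to be unimodal in $T[x]$, and derive the desired inequality from this.
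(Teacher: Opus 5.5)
Your argument depends on the re-rooting step: a non-interfering pair $r\in W^-$, $r'\in W^+$ should give a neighbour $y$ of $x$ with $\Phi(y)\le\Phi(x)$ and a strictly smaller secondary potential. You only state this as an aim, and it is false. So is the stronger statement it is meant to deliver.

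Take the bidirected path $v_1v_2v_3v_4v_5$ with requests $a_1=(v_1,v_2,v_3)$, $a_2=(v_3,v_2,v_1)$, $b_1=(v_5,v_4,v_3)$, $b_2=(v_3,v_4,v_5)$. This is a nice pair. On a bidirected path two requests interfere exactly when they point the same way, so $\cI(R,T)$ has only the edges $a_1b_2$ and $a_2b_1$, and $\omega(R,T)=2$.

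One checks $\Phi(v_1)=\Phi(v_5)=4$ and $\Phi(v_2)=\Phi(v_3)=\Phi(v_4)=2$. The number of non-unimodal requests is $2$ at $v_2,v_4$ and $4$ at $v_3$. Hence $v_2$ minimises your lexicographic potential. In $T[v_2]$ the requests $a_1,a_2$ are unimodal, so $W^-=\{b_1\}$ and $W^+=\{b_2\}$ are forced, and they do not interfere since $s_{b_1}=t_{b_2}=v_5$. Your step moves to $y=v_3$, where $\Phi$ is unchanged and the secondary potential rises to $4$. No contradiction is available at all, because $v_2$ is a global minimiser.

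Another tie-breaker does not help while $W^\pm$ are arbitrary maximum cliques. The $\Phi$-minimisers are $v_2,v_3,v_4$, and in $T[v_3]$ the maximum cliques $\{a_1\}$ and $\{a_2\}$ do not interfere either. So the property you want at an optimal root (that maximum converging and diverging cliques together form a clique of $\cI(R,T)$) can fail even though $\Phi(v_2)=\omega(R,T)$.

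Two further points do not hold up. Your intermediate claim that the arcs of $T[x,s_r]$ near $x$ are used by both $W^-$ and $W^+$ is unfounded: $t_r$ and $s_{r'}$ may lie far from $x$, and here the arcs between $v_2$ and $v_3$ belong only to $a_1,a_2$. Your fallback via the split vertex of Lemma~\ref{lem:clique} is circular, because $W^\pm$ and $Q^\pm$ depend on the root being chosen. It also gives no indication of how the inequality would follow.

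The missing idea is to stop trying to exhibit $W^-\cup W^+$ as a clique. Instead, bound $\omega(R^-,T[x])$ and $\omega(R^+,T[x])$ \emph{separately} through a different family of cliques: the requests met by one directed path (having their emission or reception arc on it) pairwise interfere. The paper proceeds as follows.

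First, it fixes a directed path $Q=(q_0,\dots,q_\ell)$ meeting the maximum number $m$ of requests, and a directed path $Q'$ ending at $q_0$ meeting the maximum number $p$. Second, it roots $T$ at the first vertex $q_s$ of their common part with $m^+_s+p^-_s\le\omega(R,T)$. This is a discrete intermediate-value choice, possible because the value at the vertex $q_i$ where $Q$ and $Q'$ separate is at most $\omega(R,T)$. Third, it uses Corollary~\ref{lem:clique-branch} to extend the in-branch (resp.\ out-branch) carrying a maximum converging (resp.\ diverging) clique of $T[q_s]$ by a piece of $Q$ or $Q'$. This yields $\omega(R^-,T[q_s])\le\max(m^-_s,p^-_s)$ and $\omega(R^+,T[q_s])\le\max(m^+_s,p^+_s)$. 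Finally, it compares the four resulting sums with $\omega(R,T)$.

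Be aware that the niceness step in the paper's Claim~\ref{claim:m+p} (``a request cannot lie in both $M^+_{s-1}$ and $M^-_s$'') overlooks requests met by $Q$ only through the arc $(q_{s-1},q_s)$, which do lie in both sets. That route therefore needs some repair as well. Still, the essential ingredient is this path-meeting count, and it is what has to replace your unproven monotonicity of $\Phi$ under re-rooting.
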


\begin{proof}
A directed path {\it meets} a request if it contains its emission arc or its reception arc.
Let $Q=(q_0,q_1, \dots ,q_\ell)$ be a directed path of $T$ that meets the maximum number $m$ of requests of $R$. Wihtout loss of generality, we may assume that $q_0$ and $q_{\ell}$ are leaves of $T$. Note that the requests meeting $Q$ form a clique, hence $m \le \omega(R,T)$.
Let $T_0\coloneqq T[q_0]$. 
Let $Q'=(q'_{\ell'},q'_{\ell'-1}, \dots , q'_0)$ be a directed path in $T_0$ with terminal vertex $q'_0=q_0$ that meets the maximum number $p$ of requests.
By definition, $p\leq m$.

Let $i\ge 0$ be the maximum index such that $q'_i=q_i$.

For every $j\le i$, let $M^-_j$ (resp. $M^+_j$, $M^\pm_j$) be the set of requests that are met by $Q[q_0,q_j]$  but not $Q[q_j,q_\ell]$  (resp. $Q[q_j,q_\ell]$ but not  $Q[q_0,q_j]$, both $Q[q_0,q_j]$ and $Q[q_j,q_\ell]$).
Set $m^-_j=|M^-_j|$, $m^+_j=|M^+_j|$ and  $m^\pm_j=|M^\pm_j|$.
By definition, $m^-_j+m^+_j+m^\pm_j=m$.

Similarly, for every $j\le i$, let $P^-_j$ (resp. $P^+_j$, $P^\pm_j$) be the set of requests that are met by $Q'[q'_\ell,q_j]$  but not $Q'[q_j,q_0]$  (resp. $Q'[q_j,q_0]$ but not  $Q'[q'_\ell,q_j]$, both $Q'[q'_\ell,q_j]$ and $Q'[q_j,q_0]$).
Set $p^-_j=|P^-_j|$, $p^+_j=|P^+_j|$ and  $p^\pm_j=|P^\pm_j|$. We have
$p^-_j+p^+_j+p^\pm_j=p$. 

Observe that $m^+_0+p^-_0 = m+p$. Moreover $m^+_i+p^-_i$ is less than 
the number of requests met by $Q'[q_\ell',q'_i] \cup Q[q_i, q_\ell]=T[q'_\ell, q_\ell]$ which form a clique.
Hence $m^+_i+p^-_i \leq \omega(R,T)$.
Let $s$ be the smallest integer such that $m^+_s+p^-_s \leq \omega(R,T)$.

\begin{claim}\label{claim:m+p}
$m^-_s+p^+_s \leq \omega(R,T)$.
\end{claim}
\begin{subproof}
If $s=0$, then we have the result because $m^-_0+p^+_0=0$ by definition.
Assume now that $s>0$. 
Since $(R,T)$ is a nice pair, each request has length at least $2$ in $T$. Thus a request cannot be in $M^+_{s-1}$ and in $M^-_{s}$.  
Therefore $m^-_{s} \le m^-_{s-1}+m^\pm_{s-1} = m- m^+_{s-1}$.
Similarly, $p^+_s \leq p- p^-_{s-1}$.
Hence $m^-_s+p^+_s \leq m+p - m^+_{s-1} - p^-_{s-1}$.
But $p\leq m \leq \omega(R,T)$ and, by definition of $s$,  $m^+_{s-1}+p^-_{s-1} \geq \omega(R,T)$.
Thus $m^-_s+p^+_s \leq \omega(R,T)$.
\end{subproof}


Let $T_s \coloneqq T[q_s]$. Let us show that
\[ \omega(R^-,T_s) + \omega(R^+,T_s) \le \omega(R,T). \]

Let $W^-$ be a set of converging requests in $T_s$ that forms a maximum clique of size $\omega(R^-,T_s)$. By Lemma~\ref{lem:clique-branch}~(i), the reception arc of every request of $W^-$ is included in a common in-branch $P^-$ of $T_s$. Observe that either $P^-$ is arc-disjoint from $Q[q_0,q_s]$, or it is arc-disjoint from $Q[q_\ell, q_s]$. In the first case, the union of $P^-$ and $Q[q_s,q_0]$ forms a directed path that meets $\omega(R^-,T_s)+p^+_s+p^\pm_s$ requests. By definition, $p^-_s+p^+_s+p^\pm_s = p$, 
so $\omega(R^-,T_s) \leq p^-_s$.
In the second case, the union of $P^-$ and $Q[q_s,q_\ell]$ forms a directed path that meets $\omega(R^-,T_s)+m^+_s+m^\pm$ requests. By definition,
$m^-_s+m^+_s+m^\pm = m$, so $\omega(R^-,T_s) \leq m^-_s$.
In both cases, we have $\omega(R^-,T_s) \le \max(m^-_s,p^-_s)$. By symmetry, we also have $\omega(R^+,T_s)\le \max(m^+_s,p^+_s)$. So
\[ \omega(R^-,T_s)+\omega(R^+,T_s) \le \max \{m^-_s+m^+_s, m^-_s+p^+_s, p^-_s+m^+_s, p^-_s+p^+_s\} \le \omega(R,T). \qedhere \]
\end{proof}

Lemmas~\ref{lem:nice}, \ref{lem:root}  and Inequality~\eqref{ineg} directly imply Theorem~\ref{thm:chi-bound}.

\section[alternative section title to avoid latex warnings]{Solving {\sc Interference $\bm{k}$-Colourability} in polynomial time} \label{sec:kcolourability}

\subsection[alternative section title to avoid latex warnings]{{\sc Interference 3-Colourability}} \label{sec:3colourability}

In this subsection, we show a polynomial-time algorithm solving
{\sc Interference 3-Colourability}.
We need some preliminaries.

A \emph{$k$-list assignment} $L$ of a graph $G$ is a mapping that assigns to every vertex
$v$ a set of $k$ non-negative integers, called \emph{colours}. An \emph{$L$-colouring}
of $G$ is a mapping $c:V\to\mathbb{N}$ such that
$c(v)\in L(v)$ for every $v\in V$.
\textsc{ $k$-List Colourability} consists in, given graph $G$ and a $k$-list assignment $L$, deciding whether there exists an $L$-colouring of $G$.

The following result is well-known and can be easily proved directly or using a reduction to $2$-SAT.
\begin{proposition}\label{prop:2listcol}
   $2$-List Colourability can be solved in $\bigO(|V(G)| + |E(G)|)$-time.
\end{proposition}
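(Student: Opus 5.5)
The plan is to reduce $2$-List Colourability to $2$-SAT and then use the linear-time algorithm for $2$-SAT, namely strongly connected components of the implication graph (Aspvall–Plass–Tarjan).

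\textbf{Encoding.} Let $G$ be a graph with a $2$-list assignment $L$, and write $L(v)=\{a_v,b_v\}$ with $a_v\neq b_v$. For each vertex $v$ introduce one Boolean variable $x_v$. Setting $x_v$ true means $c(v)=a_v$, and setting it false means $c(v)=b_v$. In this way every truth assignment corresponds exactly to a mapping $c$ with $c(v)\in L(v)$ for all $v$, and conversely.

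\textbf{Conflict clauses.} For each edge $uv$ and each colour $\gamma\in L(u)\cap L(v)$ (there are at most two), forbid the event that $u$ and $v$ both receive $\gamma$. The events "$c(u)=\gamma$" and "$c(v)=\gamma$" are literals, namely $x_u$ or $\neg x_u$, and $x_v$ or $\neg x_v$. So the constraint is the $2$-clause $\neg\ell_u(\gamma)\vee\neg\ell_v(\gamma)$. This gives at most $2|E(G)|$ clauses, and they can be built in constant time per edge, since lists have size $2$.

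\textbf{Correctness and running time.} A satisfying assignment is exactly an $L$-colouring in which no edge gets the same colour at both endpoints. Conversely, any such $L$-colouring yields a satisfying assignment. The formula has $|V(G)|$ variables and $\bigO(|E(G)|)$ clauses, and $2$-SAT is solvable in time linear in the formula size. Hence the total time is $\bigO(|V(G)|+|E(G)|)$.

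There is no real obstacle. The only care needed is to stay within linear time: the list intersections are computed in $\bigO(1)$ per edge, and we invoke a linear-time $2$-SAT algorithm rather than a naive one.
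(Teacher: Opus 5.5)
Your reduction to $2$-SAT is correct and is the route the paper has in mind: the paper gives no proof, only remarking that the result is well known and follows directly or via a reduction to $2$-SAT, and your encoding (one variable per vertex, at most two conflict clauses per edge, then a linear-time $2$-SAT algorithm) fills this in properly. One small addition: where the paper applies this (the \textsc{Interference 3-Colourability} corollary), lists may have size $1$ or $0$, and your encoding handles these with a unit clause $(\ell\vee\ell)$ or an immediate rejection.
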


\begin{lemma}
\label{lem:dominating}
Let $R$ be a set of requests on a bidirected tree $T$. Then either $\cI(R,T)$ contains a dominating set of size at most $3$, or is a comparability graph.
\end{lemma}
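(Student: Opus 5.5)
We want: either $\cI(R,T)$ has a dominating set of size at most $3$, or it is a comparability graph. The plan is to root $T$ at a well-chosen vertex and use the partition $(R^-,R^+,R^\vee)$. By Corollary~\ref{cor:conv-comp}, $\cI(R^-,T)$ and $\cI(R^+,T)$ are comparability graphs. So if $R^\vee=\varnothing$ and $R^-$ and $R^+$ are disjoint from each other in $\cI(R,T)$ (no interference between a converging and a diverging request), then $\cI(R,T)$ is a disjoint union of comparability graphs, hence a comparability graph by Proposition~\ref{prop:comp}. The proof will therefore consist in showing that any deviation from this situation produces a small dominating set.

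\textbf{Choice of root.} Pick a request $r_0\in R$. Since every directed path is converging, diverging or unimodal with respect to any root, root $T$ at $s_{r_0}$ or at a vertex on $r_0$ so that as many requests as possible become non-unimodal. Concretely, I would first handle the case where some request $r$ is unimodal for every choice of root in a suitable family. The cleanest approach is to fix the root $z$ and treat two cases.

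\textbf{Case 1: there is a unimodal request $r$.} By Lemma~\ref{lem:branch}, $r$ is adjacent to every request meeting any branch arc-disjoint from $r$. More directly, Lemma~\ref{lem:interference}\ref{it:uni-conv},\ref{it:uni-div} say that the non-neighbours of $r$ among $R^-\cup R^+$ all lie in the subtree $T[m_r]$ with endpoints related to $t_r$ or $s_r$. Among unimodal requests, the non-neighbours of $r$ have the same middle $m_r$, by Lemma~\ref{lem:interference}\ref{it:unimodal}.
\begin{itemize}
\item If there are no non-neighbours, $\{r\}$ dominates.
\item If $r$ has a non-neighbour $r'$, then by Lemma~\ref{lem:colour-size-1} and Lemma~\ref{lem:nospan} its non-neighbours are heavily constrained, and I would try to show that $\{r,r'\}$, or $\{r,r',r''\}$ with one extra request chosen on an arc at $m_r$, dominates. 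The key facts are: two non-interfering unimodal requests with the same middle use opposite arcs $(m_r,m^+_r)$ and $(m^-_{r'},m_{r'})$, and the proof of Lemma~\ref{lem:colour-size-1} shows such a pair interferes with every other request. So if $r$ has a unimodal non-neighbour, $\{r,r'\}$ dominates.
\item Otherwise all non-neighbours of $r$ are converging requests ending below $m_r$ with $s_{r'}$ related to $t_r$, or diverging requests analogously. Take one converging non-neighbour $a$ and one diverging non-neighbour $b$. By Lemma~\ref{lem:nospan}, $a$ and $b$ interfere with each other, and I expect $\{r,a,b\}$ to dominate. Here one must check, using Lemma~\ref{lem:interference}\ref{it:converging},\ref{it:mixte}, that any request non-adjacent to all three leads to a contradiction.
\end{itemize}

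\textbf{Case 2: $R^\vee=\varnothing$ for some root.} If no converging request interferes with a diverging one, we are done by Proposition~\ref{prop:comp}. Otherwise there is a pair $a\in R^-$, $b\in R^+$ with $s_a$ and $t_b$ not related (Lemma~\ref{lem:interference}\ref{it:mixte}). Re-rooting at the middle of $T[s_a,t_b]$ or near it should make things unimodal, reducing to Case 1. Alternatively, one can directly show that $\{a,b\}$ plus one more request dominates by analysing $T[s_a,t_b]$ as a path met by many requests.

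\textbf{Main obstacle.} I expect the main obstacle to be the domination check in the last subcase of Case 1 and in Case 2. This requires careful case analysis on the tree positions of an arbitrary request relative to $m_r$, $s_a$ and $t_b$, with several ancestor-relation subcases. I would first try to organise it by the path $T[s_a,t_b]$, since requests meeting this path pairwise interfere, to limit the cases.
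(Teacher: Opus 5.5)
\textbf{There is a genuine gap.} With the root fixed in advance, every branch of your plan except one aims at a dominating set of size at most $3$. The exception is the subcase where $R^\vee=\varnothing$ and no converging request interferes with a diverging one. For an arbitrary root, the dominating-set target is false.

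In the last bullet of Case~1 you assume $r$ has both a converging non-neighbour $a$ and a diverging one $b$. Then $\{r,a,b\}$ does dominate: converging non-neighbours of $r$ have their source below $m_r^+$, diverging ones have their target below $m_r^-$, so Lemma~\ref{lem:interference}\ref{it:mixte} applies. But the one-sided situation is left open, and there no small dominating set need exist. Take the path $u\,c_1\,v$ with further leaves $c_2,\dots,c_k$ attached to $v$, and $R=\{r,q_2,\dots,q_k\}$ with $r=(u,c_1,v)$ and $q_i=(c_i,v)$. No two requests interfere, so $\cI(R,T)$ is edgeless on $k$ vertices. Yet rooted at $c_1$, $r$ is unimodal and all its non-neighbours are converging.

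Case~2 fails the same way. Take the path $x\,z\,y$ with leaves $u_1,\dots,u_k$ attached to $y$, and $R=\{a,b,q_1,\dots,q_k\}$ with $a=(x,z)$, $b=(z,y)$, $q_i=(u_i,y)$. Rooting at $z$ gives $R^\vee=\varnothing$ and $a$ interferes on $b$, but $\cI(R,T)=K_2\uplus kK_1$. So neither ``$\{a,b\}$ plus one request'' nor a reduction to Case~1 can work. Choosing the root to minimise $|R^\vee|$ does not help either, since $|R^\vee|=0$ already at $z$.

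\textbf{What is missing} is a way to choose the root so that failing to find a third dominating request forces every request to be converging (or every request diverging) for that root. Corollary~\ref{cor:conv-comp}\ref{it:conv-comp:comparability-minus} then gives comparability. The paper chooses the root extremally:
\begin{itemize}
\item In the reduced tree, take a leaf $s_1$ which, up to directional duality, is the first vertex of a request $r_1$.
\item Take an arc $(t_2^-,t_2)$ directed away from $s_1$, lying on some request $r_2$, with $t_2$ as far from $s_1$ as possible.
\item Requests interfering with neither $r_1$ nor $r_2$ are converging in $T[s_1]$ with source related to $t_2$. Hence they lie in $T[t_2,s_1]\cup T_2^-[t_2]$, where $T_2$ is the subtree of descendants of $t_2$.
\item Any request using an arc of $T[t_2,s_1]$ is then a valid third dominator.
\item Failing that, so is any request using an arc directed away from $t_2$; by extremality such an arc lies outside $T_2$.
\item If neither exists, all requests are converging in $T[t_2]$.
\end{itemize}
In your two examples this root $t_2$ is $v$ and $y$, respectively.
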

\begin{proof}
Since we are considering reduced trees as described in Subsection~\ref{subsec:reducing}, all the leaves of $T$ are contained in a request. 

Let $s_1$ be a leaf and $r_1$ a request containing it.
By directional duality, we may assume that $s_1$ is the first vertex of $r_1$.

Let $A^+_1$ be the set of arcs in $T^+[s_1]$ that are contained in a request of $R$. It is not empty because it contains the arcs of $r_1$.  
Let $a_2=t_2^-t_2$ be an arc in  $A^+_1$ which is the furthest from $s_1$ in $T$. Then there is no arc of $A^+_1$ whose head is a descendant of $t_2$ in $T[s_1]$.
Let $r_2$ be a request containing $a_2$. (Possibly $r_2=r_1$.)

Let $T_2$ be the bidirected subtree of $T$ induced by the descendant of $t_2$ in $T[s_1]$.
Let $R'$ be the set of requests interfering with none of $r_1,r_2$.
Note that all requests of $R'$ are contained in $T[t_2,s_1]\cup T^-_2[t_2]$.

If a request $r$ contains an arc in $T[t_2,s_1]$, then it interferes with all requests of $R'$, so $\{r_1, r_2, r\}$ is a dominating set of $\cI(R,T)$. Henceforth, we may assume that all requests of $R'$ are contained in $T^-_2[t_2]$.

If a request $r$ contains a diverging arc in $T[t_2]$, then by definition of $t_2$, this arc is not in  $T_2$. Therefore every request of $R'$ interferes on $r$ and $\{r_1, r_2, r\}$ is a dominating set of $\cI(R,T)$.
Henceforth, we may assume that all requests are converging in $T[t_2]$.
But then, by Corollary~\ref{cor:conv-comp}\ref{it:conv-comp:comparability-minus}, $\cI(R,T)$ is a comparability graph.
\end{proof}

\begin{remark}\label{rem:dom}
The above proof can easily be translated into a $\bigO(|R|)$-time algorithm that either finds a dominating set of size at most $3$ or returns that $\cI(R,T)$ is a comparability graph.
\end{remark}

Lemma~\ref{lem:dominating} can be used to solve \textsc{Interference 3-Colourability} in quadratic time.

\begin{corollary}
\textsc{Interference 3-Colourability} can be solved in $\bigO(|R|^2)$ time.
\end{corollary}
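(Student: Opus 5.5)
The plan is to combine Lemma~\ref{lem:dominating} (in the algorithmic form of Remark~\ref{rem:dom}) with a branching argument that reduces to $2$-List Colourability (Proposition~\ref{prop:2listcol}). First I would apply the reduction of Subsection~\ref{subsec:reducing}, which gives $|V(T)|\le 2|R|+2$. Then I compute $\dI(R,T)$, and hence $\cI(R,T)$, in $\bigO(|R|^2)$ time. Next I run the $\bigO(|R|)$ procedure of Remark~\ref{rem:dom}. It either certifies that $\cI(R,T)$ is a comparability graph or returns a dominating set $D$ with $|D|\le 3$.

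In the comparability case, $\chi(R,T)=\omega(R,T)$ because comparability graphs are perfect. So it suffices to decide whether $\omega\le 3$. The procedure in the proof of Lemma~\ref{lem:dominating} shows that all requests are converging in $T[t_2]$, so the graph is $\cI(R^-,T[t_2])$. I can then take the transitive orientation given by the ancestor relation on the reception arcs (Lemma~\ref{lem:interference}\ref{it:converging}) and compute a longest chain. Equivalently, I can compute an optimal colouring of this comparability graph by a longest-path computation in the resulting acyclic orientation. Either way this runs in time linear in the size of the graph, so $\bigO(|R|^2)$. Alternatively I could simply test $\omega\le 3$ directly, but the longest-path method avoids any cubic enumeration.

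In the dominating-set case, I enumerate all proper $3$-colourings of $\cI(R,T)\langle D\rangle$. Up to permutation of colours there are at most $3^3=27$ of them, which is a constant. For each precolouring $c_D$, every vertex $v\notin D$ has a neighbour in $D$, so its list $\{1,2,3\}\setminus\{c_D(u): u\in N(v)\cap D\}$ has size at most $2$. If some list is empty, I discard this precolouring. Otherwise I pad lists of size $1$ (this is harmless, or equivalently I propagate forced colours) and solve the resulting $2$-List Colourability instance on $\cI(R,T)-D$ in $\bigO(|R|+|E|)=\bigO(|R|^2)$ time by Proposition~\ref{prop:2listcol}. The original instance is $3$-colourable if and only if some precolouring extends.

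Correctness in the second case is immediate, since any $3$-colouring restricts to one of the enumerated precolourings of $D$. The only care needed is that the lists really have size at most $2$, which holds precisely because $D$ dominates. The main point to check is the comparability branch: I must make sure that deciding $\chi\le 3$ there genuinely costs $\bigO(|R|^2)$. This relies on the transitive orientation being available explicitly from the tree structure, rather than on invoking a general comparability recognition algorithm.
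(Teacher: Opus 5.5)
Your proof is correct and follows the paper's argument: apply Lemma~\ref{lem:dominating} via Remark~\ref{rem:dom}, then either colour the comparability graph optimally, or enumerate the at most $27$ colourings of the dominating set and reduce each extension to $2$-List Colourability (Proposition~\ref{prop:2listcol}). The only difference is that you read the transitive orientation directly off the tree, using the ancestor order on the reception arcs with requests that share a reception arc ordered by index, instead of calling Ho\`ang's algorithm; one caution is that padding singleton lists with an extra colour is not harmless once $D$ is deleted, so rely on propagating forced colours (or on a 2-SAT encoding that allows lists of size $1$).
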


\begin{proof}
Let $R$ be a set of requests in a bidirected tree $T$, and set $G=\cI(R,T)$.
Following Remark~\ref{rem:dom}, in linear time, either we find a dominating set $S$ of size at most $3$ in $G$, or we get that $G$ is a comparability graph.

In the second case, one can compute $\chi(G)$ in $\bigO(|V(G)|^2)= \bigO(|R|^2)$ time using the algorithm of Hoàng~\cite{Hoang94}.

In the first case, we enumerate the $3^{|S|}\leq 27$ possible
colourings of $S$, and, for each of them, we check whether this partial colouring extends to $G$. This can be done in $\bigO(|V(G)| + |E(G)|)= \bigO(|R|^2)$ time using \textsc{$2$-List-Colourability} (Proposition~\ref{prop:2listcol}). Indeed the set of available colours at each vertex of $V(G)\setminus S$ has size at most $2$ because at least one colour is forbidden by a neighbour in $S$.
\end{proof}

We have just proved that \textsc{Interference $3$-Colourability} is polynomial-time solvable, and would like to extend that result to \textsc{Interference $k$-Colourability} for larger values of $k$. Unfortunately, the above method cannot be used in that latter setting.
Indeed, we use the fact that \textsc{ $2$-List Colourability} is polynomial-time solvable. But as we shall now see, \textsc{$3$-List Colourability} is NP-complete even on interference graphs.

The following theorem is certainly well-known, but we include the short proof for completeness.

\begin{thm}
\label{thm:3-list-color-NPC}
\textsc{$3$-List Colourability} is NP-complete on complete bipartite graphs.
\end{thm}

\begin{proof}
Reduction from \textsc{$3$-SAT}.


Let $\mathcal{C}=(C_1, \ldots, C_m)$ be an instance of \textsc{$3$-SAT} on a set $X$ of $n$ variables $x_1, \ldots, x_n$. 
We denote $C_j = \alpha_j \vee \beta_j \vee \gamma_j$.
Let $(U,V)$ be the bipartition of $K_{m,n}$ with $U=\{u_1,\ldots,u_m\}$ and $V=\{v_1,\ldots,v_n\}$. 
Let $L$ be the list assignment defined by $L(u_j)=\{\alpha_j,\beta_j,\gamma_j\}$ for all $j\in [m]$, and $L(v_i)=\{x_i,\overline{x_i}\}$ for all $i\in [n]$.

If $\mathcal{C}$ is satisfiable, let $\phi\colon X \to \{\text{true}, \text{false}\}$ be a truth assignment satisfying $\mathcal{C}$. We let $\psi$ be the $L$-colouring such that $\psi(u_j)$ is any literal of $C_j$ that is true, and $\psi(v_i)$ is the literal of $x_i$ that is false. Then $\psi$ is a proper $L$-colouring of $G$.

The converse is straightforward.

If one wants all lists of $L$ to have length $3$, then one might add an additional universal colour $\delta$ to each list $L(v_i)$, add an additional set $U_0$ of $6$ vertices to $U$ with lists $\{\delta\}\cup S$ for every $S\in \binom{[4]}{2}$, and an additional set $V_0$ of  $3$ vertices to $V$ with all possible lists in $\binom{[4]}{3}$. Observe that at least two distinct colours $x,y\in [4]$ must appear in any proper $L$-colouring of $G\langle V_0\rangle$, which forces the vertex of $U_0$ with list $\{x,y,\delta\}$ to be coloured $\delta$. Hence the colour $\delta$ is forbidden from all the lists of $V$, and we are back in the situation described above.
\end{proof}

As we shall now see, there is a set of requests whose interference graph is the complete bipartite graph $K_{m,n}$ (for every $m,n\ge 1$). So from \Cref{thm:3-list-color-NPC} we infer the following.

\begin{corollary}
\textsc{$3$-List Colourability} is NP-complete on the class of interference graphs $\cI(R,T)$, even when the tree $T$ is a star and all requests of $R$ have length $1$.
\end{corollary}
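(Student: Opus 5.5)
We realise $K_{m,n}$ on a star and then invoke Theorem~\ref{thm:3-list-color-NPC}. Membership in NP is immediate, since an $L$-colouring can be verified in polynomial time. For hardness, we construct, for every $m,n\ge 1$, a star $T$ and a set $R$ of length-$1$ requests with $\cI(R,T)\cong K_{m,n}$. The reduction of Theorem~\ref{thm:3-list-color-NPC} then transfers directly, because the list assignment lives on the graph and does not depend on how the graph is realised.

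For the construction, let $T$ be a bidirected star with centre $c$ and leaves $a$, $b_1,\dots,b_n$. A request of length $1$ is a single arc. For a request $r$ that is one arc, $e^+_r=e^-_r=r$, so $r$ interferes on $r'$ exactly when the path $T[s_r,t_{r'}]$ starts with the arc $r$ and ends with the arc $r'$.

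\begin{itemize}
\item Take $m$ parallel copies of the arc $(a,c)$; call them $u_1,\dots,u_m$.
\item For each $i$, take the request $v_i=(c,b_i)$.
\end{itemize}

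We then check the interferences.

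\begin{itemize}
\item For $u_j$ and $v_i$: the path $T[a,b_i]=(a,c,b_i)$ has first arc $(a,c)$ and last arc $(c,b_i)$, so $u_j$ interferes on $v_i$.
\item Two copies $u_j,u_{j'}$: the path $T[a,c]$ is the single arc $(a,c)$, which is both the emission arc of $u_j$ and the reception arc of $u_{j'}$. So parallel copies of the same arc \emph{do} interfere, and this choice of $u_j$ fails.
\end{itemize}

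The fix is to use parallel copies of $(c,a)$ instead. Set $u_j=(c,a)$ for all $j$ and $v_i=(b_i,c)$ for all $i$, and recheck.

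\begin{itemize}
\item $u_j$ and $u_{j'}$: the path $T[c,a]$ is the single arc $(c,a)$, so the same problem arises and they interfere.
\end{itemize}

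Any two requests using the same arc interfere. Therefore the $m$ vertices of one side must be $m$ distinct arcs that pairwise do not interfere.

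With distinct leaves, take $u_j=(a_j,c)$ on leaves $a_1,\dots,a_m$ and $v_i=(c,b_i)$ on leaves $b_1,\dots,b_n$.

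\begin{itemize}
\item $u_j$ and $u_{j'}$ with $j\ne j'$: a path from $a_j$ ending with the arc $(a_{j'},c)$ would have to end at $c$ via $a_{j'}$, which is impossible. Neither interferes on the other.
\item $v_i$ and $v_{i'}$ with $i\ne i'$: the path $T[c,b_{i'}]$ is the single arc $(c,b_{i'})$, which is not $v_i$. They do not interfere.
\item $u_j$ and $v_i$: the path $T[a_j,b_i]=(a_j,c,b_i)$ starts with $u_j$ and ends with $v_i$, so $u_j$ interferes on $v_i$.
\end{itemize}

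Hence $\cI(R,T)=K_{m,n}$. The same conclusion follows from Lemma~\ref{lem:interference}\ref{it:converging}--\ref{it:mixte} with root $c$: $t^-_{u_j}=a_j$ are pairwise unrelated, $s^+_{v_i}=b_i$ are pairwise unrelated, and $s_{u_j}=a_j$ and $t_{v_i}=b_i$ are unrelated.

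The construction is polynomial in $m+n$. Composing it with the reduction from \textsc{3-SAT} in Theorem~\ref{thm:3-list-color-NPC} yields the corollary. The only real obstacle is the one met above: requests sharing an arc always interfere, so the two sides must use distinct leaves.
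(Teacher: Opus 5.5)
Your final construction is exactly the paper's: the $m$ converging arcs $(a_j,c)$ and the $n$ diverging arcs $(c,b_i)$ on a star with centre $c$ realise $K_{m,n}$. Your interference checks, both direct and via Lemma~\ref{lem:interference}, are correct, so hardness transfers from Theorem~\ref{thm:3-list-color-NPC} as you claim; the abandoned parallel-copy attempts are harmless but should be deleted from a written proof.
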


\begin{proof}
Let $T=K_{1,m+n}$ be rooted in its unique internal node $x_0$.
Let $(L_1,L_2)$ be a partition of the leaves of $T$ with $L_1=m$ and $L_2=n$.
Let $R$ be composed of $m$ converging requests of the form $y \to x_0$ for every $y\in L_1$, and of $n$ diverging requests of the form $x_0 \to y$ for every $y \in L_2$. Then $\cI(R,T)$ is the complete bipartite graph $K_{m,n}$ with bipartition $(L_1, L_2)$.
\end{proof}

\subsection[alternative section title to avoid latex warnings]{{\bfseries{\sc Interference $\bm{k}$-Colourability}} when  $\bm{k\geq 4}$} \label{sec:4colourability}

\begin{thm} \label{th:FPT}
 \textsc{Interference $k$-Colourability} can be solved in time
 $\bigO(k36^k|R|^3)$.
\end{thm}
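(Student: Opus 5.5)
We plan a dynamic programme over the rooted tree $T$, reading the colouring locally at each node. First reduce the instance: compute $\omega(R,T)$ by Theorem~\ref{thm:clique}, or simply check quickly that no clique has size more than $k$. If $\omega(R,T)>k$ we answer no. By Theorem~\ref{thm:chi-bound}, if $k\ge 2\omega(R,T)$ we answer yes. So from now on $\omega(R,T)\le k<2\omega(R,T)$. In particular, every directed path of $T$ meets at most $k$ requests, since the requests meeting a directed path form a clique.

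Next we root $T$ at a vertex $z$ and process it bottom-up. For a vertex $x$, let $R_x$ be the set of requests that have an arc inside the subtree $T[x]$ together with the arc from $x$ to its parent. The key observation, from Lemma~\ref{lem:interference}, is that a request lying entirely inside $T[x]$ interacts with requests outside $T[x]$ only through the arc $(x,p(x))$ and its reverse. Concretely, this interaction depends only on:
\begin{itemize}[label={}]
\item the set $C^{\uparrow}_x$ of colours used by requests whose emission arc lies in $T[x]$ and which leave $T[x]$, or which have emission arc in $T[x]$ and reach $(x,p(x))$ by a converging path;
\item the set $C^{\downarrow}_x$ of colours used by requests whose reception arc lies in $T[x]$ and which can be reached by a diverging path from $(p(x),x)$.
\end{itemize}
A request with emission arc inside $T[x]$ whose broadcast reaches $x$ going up interferes on every request whose reception arc is outside $T[x]$ and is reachable by a path through $p(x)$ that is not turning back. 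The symmetric statement holds downward. So the state at $x$ is a pair of subsets of $[k]$ (the ``outgoing broadcast colours'' and the ``incoming reception colours''), plus the colours of the at most two requests crossing the edge itself. Since each such set indexes a clique, a state is a pair of disjoint-ish subsets, and the number of states is bounded by something like $6^k$ (each colour gets one of a few roles). The combination step at $x$ merges its children's state tables pairwise. Merging two tables costs the product of their sizes, $36^k$. Each merge also checks the requests passing through $x$, at a cost of $\bigO(k|R|)$. Summing over $\bigO(|R|)$ vertices, and over candidate colourings of crossing requests, gives the claimed $\bigO(k36^k|R|^3)$.

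The order of steps is as follows.
\begin{enumerate}[label=(\arabic*)]
\item Formalise the interface lemma: two colourings of the two sides of the edge $xp(x)$ combine into a proper colouring if and only if the upward broadcast colours from one side avoid the reception colours reachable on the other side, in both directions, and the crossing requests are consistent.
\item Bound the number of interface states, using the fact that the relevant colour sets on each side are cliques.
\item Carry out the bottom-up merge over children. Handle nodes of high degree by merging children one by one, using an auxiliary state that records the accumulated broadcast/reception colours among the already processed siblings. Sibling subtrees interfere with each other through $x$ in the same way, so this auxiliary state has the same shape.
\end{enumerate}

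The main obstacle is step (1) together with the sibling merging in step (3). A request's broadcast from one child subtree reaches every other child subtree and the parent side, except back through its own arc. So we must verify that the compressed interface really captures all interferences, including unimodal requests whose middle is $x$. Those cases are governed by Lemma~\ref{lem:interference}\ref{it:unimodal}--\ref{it:uni-div}, and they need careful case analysis to ensure the state count stays $2^{\bigO(k)}$ with base giving $36^k$ after pairwise merging.
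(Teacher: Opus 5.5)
Your tree DP is a different route from the paper's, but as written it does not establish the stated bound. Its central quantitative claim, that the interface at an edge has $2^{O(k)}$ (``something like $6^k$'') states, is asserted rather than derived, and it fails for the interface you describe.

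\textbf{The crossing requests.} The requests crossing the edge $xp(x)$ are not ``at most two''. All requests using the arc $(x,p(x))$ pairwise interfere, so there can be up to $k$ of them, and up to $k$ more using $(p(x),x)$. Each crossing request interacts with the inner requests according to where its end inside $T[x]$ lies. For instance, an upward crossing request $r$ interferes on an inner request $q$ iff $T[s_r,t_q]$ starts with $e^+_r$ and ends with $e^-_q$, which depends on $s_r$. Likewise, it interacts with outer requests according to where its other end lies. So the condition ``the crossing requests are consistent'' in your step (1) forces the state to store the colour of each individual crossing request. That is an injection of up to $k$ requests into $[k]$ in each direction, up to $(k!)^2$ possibilities on top of the $4^k$ pairs $(C^{\uparrow}_x,C^{\downarrow}_x)$. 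This gives $2^{\Theta(k\log k)}$ states, and nothing in your proposal canonicalises it. The $36^k$ merge cost and the extra factor $|R|$ ``over candidate colourings of crossing requests'' are chosen to hit the target rather than computed.

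\textbf{Two smaller errors.} First, $C^{\uparrow}_x$ does not index a clique. For distinct children $b,d$ of $x$, the requests $(b,x)$ and $(d,x)$ both broadcast through $(x,p(x))$ but do not interfere, by Lemma~\ref{lem:interference}\ref{it:converging}. Second, computing $\omega(R,T)$ via Theorem~\ref{thm:clique} already costs $\bigO(|R|^{4.5})$, which exceeds $\bigO(k36^k|R|^3)$ for fixed $k$. The paper only checks two cheap necessary conditions: no directed path meets more than $k$ requests, and $|R^\vee|\le 2k$.

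\textbf{The missing idea.} What makes $36^k$ reachable in the paper is a global structural fact your approach never uses. By Lemma~\ref{lem:colour-size-1}, every colour class is mainly converging, mainly diverging, or a pair of unimodal requests. A recolouring argument (Claim~\ref{claim1}) lets one assume that the non-diverging and the non-converging exceptional requests each form a clique. Hence the converging part of $Q^-$ and the diverging part of $Q^+$ each lie on a branch (Corollary~\ref{lem:clique-branch}). Together with $|R^\vee|\le 2k$, all of $(Q^-,Q^+)$ can then be guessed in $3^{2k}(2^k|R|)^2=36^k|R|^2$ ways. An exchange argument (Claim~\ref{claim2}) then shows that, once the exceptional requests are coloured, the non-exceptional diverging (resp.\ converging) requests may be coloured greedily in DFS order. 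The remaining unimodal requests are paired via a perfect matching in the complement, and each extension costs $\bigO(k|R|)$. Without an ingredient of this kind, a correctly specified version of your DP, storing crossing colours explicitly, would presumably give a bound of the form $k^{O(k)}|R|^{O(1)}$. That is still FPT, but it is not the running time in the statement.
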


\begin{proof}
Let $R$ be a set of requests on a bidirected tree $T$ rooted at $z$.
Set $G=\cI(R,T)$. 

Let $\phi$ be a proper $k$-colouring of $G$. By Lemma~\ref{lem:colour-size-1}, each colour $i\in [k]$ can be of three types as follows:
\begin{enumerate}[label=(\roman*)]
    \item $i$ is {\it mainly converging} if at most one request from $R^+\cup R^\vee$ is coloured $i$;
    \item $i$ is {\it mainly diverging} if at most one request from $R^-\cup R^\vee$ is coloured $i$;
    \item $i$ is {\it unimodal} if its colour class consists of exactly two requests from $R^\vee$.
\end{enumerate}

If a non-diverging (respectively non-converging) request is coloured with a mainly diverging (respectively converging) colour, we say that this is an \emph{exceptional} request.
Note that a colour class could be both mainly converging and mainly diverging, in which case it is of the form $\{r,r'\}$ where $r$ is a converging request $r'$ is a diverging request; if $t_r$ is closer to the root $z$ than $s_{r'}$ then $r$ is the exceptional request of that colour class, otherwise $r'$ is.

We first show that we may assume that the exceptional requests in $\phi$ can be covered with two cliques.

\begin{claim} \label{claim1}
There exists a proper $k$-colouring of $G$ such that all non-diverging exceptional requests form a clique and all non-converging exceptional requests form a clique.
\end{claim}
\begin{subproof}
Consider a proper colouring $\phi$ that minimizes the number of exceptional requests.
Assume for a contradiction that there exist two independent non-diverging exceptional requests $r,r'$. Set $\gamma=\phi(r)$ and $\gamma' = \phi(r')$. Note that $\gamma\neq \gamma'$ by definition of exceptional request.
The diverging requests coloured $\gamma$ are contained in $T^+_{s_r} \cup T[m_r,s_r]$ (recall that $m_r=t_r$ if $r$ is converging), where $T^+_{s_r}$ is the subtree of $T^+$ rooted at $s_r$; in particular, for every diverging request $r_0$ coloured gamma, $s_{r_0}^+$ is a descendent of $m_r^-$ in $T$.
Likewise, for every diverging request $r_0'$ coloured $\gamma'$, $s_{r'_0}^+$ is a descendent of $m_{r'}^-$ in $T$.

Since $r$ and $r'$ are non-adjacent, $m_r^-$ and $m_{r'}^-$ are incomparable in $T$; so are in particular $x$ and $x'$ for every descendent $x$ of $m_r^-$ and $x'$ of $m_{r'}^-$. Hence the diverging requests coloured with $\gamma$ or $\gamma'$ form together an independent set.
Therefore we may recolour all of them with colour $\gamma$ and recolour $r$ (and $r'$) with colour $\gamma'$. Doing so, we obtain a proper $k$-colouring of $G$ with two fewer exceptional requests than $\phi$, a contradiction.

By directional duality, we get the result for non-converging exceptional requests. 
\end{subproof}

Let $\phi$ be a proper $k$-colouring of $G$ where the set $Q^+$ of non-converging exceptional requests is a clique, as well as the set $Q^-$ of non-diverging exceptional requests, and let $Q \coloneqq Q^+ \cup Q^-$ be the set of exceptional requests in $\phi$. 
We denote by $\phi_0$ the partial colouring $\restrict{\phi}{Q}$ induced by $\phi$ on the subgraph $G\langle Q\rangle$. Let $\ext$ be the set of proper $k$-colourings $\phi'$ of $G$ such that $\restrict{\phi'}{Q} = \phi_0$, $Q^+$ is the set of non-converging exceptional requests of $\phi'$, and $Q^-$ its set of non-diverging exceptional requests; observe that this set is non-empty since $\phi \in \ext$.
We will show that, if we are given $\phi_0$, then we can retrieve a proper $k$-colouring of $G$ in polynomial time. 

Let $Q^- = q_1, \ldots, q_t$ be the non-diverging exceptional requests of $\phi$, ordered with a postfix depth-first search (DFS) according to their source nodes (so, if $s_{q_j}$ is an ancestor of $s_{q_i}$ in $T$, we must have $i\le j$). Free to permute the colours, we may assume that $\phi_0(q_i)=i$ for each $i\in [t]$. 
Let $D = r_1, \ldots, r_n$ be the diverging non-exceptional requests, ordered with a prefix DFS according to their emission arcs (so, if $s_{r_i}^+$ is an ancestor of $s_{r_j}^+$ in $T$, we must have $i \le j$). 
We let $\phi_1$ be the colouring obtained from $\phi_0$ by extending it to $D$ greedily in that order by using the smallest available colour for each $r_i\in D$.

\begin{claim}
\label{claim2}
    There is a proper $k$-colouring $\phi' \in \ext$ of $G$ such that $\phi_1(r) \in [t] \implies \phi'(r)=\phi_1(r)$ for every request $r \in D$.
\end{claim}

\begin{subproof}
    Given a colouring $\phi' \in \ext$, we say that a request $r \in D$ is \emph{bad in $\phi'$} if $\phi_1(r) \in [t]$ and $\phi'(r) \neq \phi_1(r)$. 
    Let us assume for the sake of contradiction that every colouring in $\ext$ contains at least one bad request, and let $\phi' \in \ext$ be the one that maximises the smallest index $i_0$ of a bad request. So, setting $D_0 \coloneqq \{r_i \in D : i < i_0\}$, no request in $D_0$ is bad. This means that, for every colour $\gamma \in [t]$, $\phi_1^{-1}(\gamma) \cap D_0 = \phi'^{-1}(\gamma)\cap D_0$.
    Let us write $r\coloneqq r_{i_0}$, $\alpha \coloneqq \phi_1(r) \in [t]$, and $\beta \coloneqq \phi'(r)$. Since $\alpha \in [t]$, there is an exceptional request $q\in Q^-$ with $\phi_1(q) = \phi_0(q)=\alpha$.
    By construction, $\alpha = \min \Big([k] -  \phi_0(N(r)\cap Q) - \phi_1(N(r)\cap D_0)\Big)$.
    Let $\phi''$ be a copy of $\phi'$, in which we perform the following recolouring steps:
    \begin{itemize}
        \item $\phi''(r) \gets \alpha$;
        \item for every $r'\in D$ such that $\phi'(r')=\alpha$ and $s^+_{r'}$ is a descendent of $s^+_{r}$ in $T$, we do $\phi''(r') \gets \beta$. 
    \end{itemize}
    We claim that $\phi''$ is a proper $k$-colouring of $G$ in which no request $r_i$ with $i\le i_0$ is bad, which contradicts the choice of $\phi'$.
    To see this, we first argue that $\alpha \notin \phi''(N(r))$.
    Indeed, $\alpha$ is mainly diverging in $\phi'$, so apart from $q$ (which is not in $N(r)$ since $\phi_1(r)=\phi_1(q)=\alpha$), all requests coloured $\alpha$ in $\phi'$ are in $D$. We have $\alpha \notin \phi_1(N(r) \cap D_0)$, and since no request in $D_0$ is bad, in particular $\alpha \notin \phi'(N(r) \cap D_0)$. We infer that $N(r) \cap \phi'^{-1}(\alpha) \subseteq \{r_i \in D : i > i_0\}$. So every neighbour $r'$ of $r$ coloured $\alpha$ in $\phi'$ is such that $s_{r'}^+$ is a descendent of $s_r^+$ in $T$, which implies that $\phi''(r')=\beta$. All the other requests have the same colour in $\phi'$ and $\phi''$, so we conclude that $\alpha\notin \phi''(N(r))$.
    Next, if there is $r'\in D$ such that $\phi'(r')=\alpha$ and $s^+_{r'}$ is a descendent of $s^+_{r}$ in $T$, we argue that $r$ is the only neighbour of $r'$ coloured $\beta$ in $\phi'$. 
    Observe that $\beta > \alpha$; indeed either $\beta \notin [t]$ and this is obvious, or else $\phi'^{-1}(\beta)\cap D_0 = \phi_1^{-1}(\beta)\cap D_0$, so $\beta \notin \phi_1(N(r) \cap D_0)$, hence by construction $\phi_1(r)\le \beta$.
    Since $r$ is not exceptional, there is at most one non-diverging request $q'$ coloured $\beta$, in which case $q'\in Q^-$. Since $\beta > \alpha$, we infer that $s_{q'}$ is not a descendant of $s_q$ in $T$, and since $\phi'(r)=\phi'(q')$, we infer that $r$ and $q'$ do not interfere, so $s_{q'}$ is an ancestor of $t_r$ in $T$, from which we infer (because $r$ and $q$ do not interfere) that $s_{q'}$ is an ancestor of $s_q$ in $T$. In particular, $q'$ and $r'$ do not interfere.
    All other requests $r''$ coloured $\beta$ are diverging, and do not interfere with $r$, so $s_{r''}^+$ and $s_r^+$ are unrelated in $T$, which implies that $s_{r''}$ and $s_{r'}$ are also unrelated in $T$. 
    We conclude that $\phi''$ is proper, as desired. 
    To end the proof, there remains to show that no request $r_i$ with $i\le i_0$ is bad in $\phi''$. No request from $D_0$ is bad in $\phi'$, and $\alpha \notin \phi_1(N(r) \cap D_0)$, so no request from $D_0$ is recoloured in $\phi''$, i.e. $\restrict{\phi'}{D_0} = \restrict{\phi''}{D_0}$. Hence no request from $D_0$ is bad in $\phi''$; neither is $r=r_{i_0}$, which ends the proof.
\end{subproof}

By directional duality, a similar statement holds for the converging requests. 

\medskip

The algorithm begins with the following preprocessing, which can be easily performed in time $\bigO(|R|^2)$. 
It first checks some obvious conditions of non $k$-colourability: firstly, it checks whether there is a directed branch intersecting more than $k$ requests
which then form a clique of size greater than $k$; secondly, it checks whether $|R^\vee|  > 2k$ in which case $\cI(R^\vee,T)$ contains a clique of size greater than $k$ because it is a cobipartite graph by Corollary~\ref{cor:conv-comp}~\ref{it:conv-comp:cobipartite}. In the affirmative, the graph $G$ is not $k$-colourable, and the algorithm returns "No". 
Otherwise, in the negative, no directed branch intersects more than $k$ requests, and $|R^\vee|\le 2k$.
Then the algorithm orders the diverging requests $r\in R^   +$ with a prefix DFS ordering according to the node $s_r^+$, and for each of them records the list of diverging requests $r'\in N(r) \cap R^+$ that precede $r$ in that order into a binding \textsf{ancestor[r]} (this list has size less than $k$).
It proceeds similarly for the set $R^-$ of converging requests.

Next, the algorithm searches for the $k$-colouring of $G$ promised by Claim~\ref{claim2}. It proceeds by enumerating all possible choices for $(Q^-,Q^+)$ as follows.
It first considers all unimodal requests and, for each of them, decides whether it should belong to $Q^-$, $Q^+$, or neither. Since $|R^\vee|\le 2k$, there are at most $3^{2k}$ choices. 
For each of them, it first checks whether the complement of $G\langle R^\vee \setminus (Q^-\cup Q^+)\rangle$ has a perfect matching $M$; this can be done in time $\bigO(k^{2.5})$ by using the algorithm of Hopcroft and Karp~\cite{HoKa1973}. If not, it deduces that this specific choice of $(Q^- \cap R^\vee,Q^+ \cap R^\vee)$ is invalid and considers the next one.
Otherwise, there are at most $2^k|R|$ possibilities for the converging requests of $Q^-$ (at most $|R|$ choices for the branch, and $2^k$ choices for a subset of converging requests on that branch), and at most $2^k|R|$ possibilities for the diverging requests of $Q^+$ (by symmetry).
For each possible choice of $(Q^-,Q^+)$, it orders $Q^- = q_1, \ldots, q_t$ with a postfix DFS according to the source nodes, and $Q^+ = q_{t+1}, \ldots, q_{t+t'}$ with a postfix DFS according to the reception nodes. It then fixes $\phi_0(q_i) \coloneqq i$ for each $i\in [t+t']$, and colours each matched pair of unimodal non-exceptional requests with a distinct colour in $\{t+t'+1, \ldots, t+t'+|M|\}$. 
By Claim~\ref{claim2} and its directional dual, we may extend $\phi_0$ to $\phi_1$ greedily by using only colours from $[t]$ for the set $R^+ \setminus Q^+$ of non-exceptional diverging requests (if no colour from $[t]$ is available for a given request $r\in R^+\setminus Q^+$, we leave it uncoloured), and only colours from $\{t+1, \ldots t+t'\}$ for the set $R^-$ of non-exceptional converging requests (again, leaving some of them uncoloured if needed).
All remaining uncoloured requests of $R^+$ are coloured greedily with a new set of $k^+$ colours, by following the preprocessed order on $R^+$. This returns an optimal colouring of these requests, in time $\bigO(k|R|)$ (for each request $r$, it suffices to compute in time $\bigO(k)$ the set of colours assigned to \textsf{ancestor[r]} to decide which colour to assign to $r$).
Similarly, the remaining uncoloured requests of $R^-$ are coloured greedily with a new set of $k^-$ colours.
The resulting colouring uses $t+t'+|M|+k^++k^-$ colours. If this number is less or equal to $k$, then the algorithm stops and returns it; otherwise, it considers the next possible choices for $(Q^-,Q^+)$.

Since it performs $\bigO(36^k|R|^2)$ extension tests taking $\bigO(k|R|)$ time each, the complexity of the algorithm is 
$\bigO(k36^k|R|^3)$. 
\end{proof}

\section{Conclusion and further research} \label{sec:conclusion}

In Theorem~\ref{thm:clique}, we showed a
$\bigO(|R|^{4.5})$-time algorithm that finds the clique number of an interference graph. It is very likely that a faster algorithm can be found.

\begin{problem}
Find an algorithm that computes the clique number of a given interference graph on $|R|$ vertices in less than $\bigO(|R|^{4.5})$ time.
\end{problem}

Regarding the chromatic number of interference graphs, we showed a 2-approximation $\bigO(|R|^{5/2})$-time algorithm for $\chi(R,T)$ (Corollary~\ref{cor:approximate}).
Then, we have shown that the \textsc{Interference Colourability} problem of deciding whether $\chi(R,T) \leq k$ is FPT when parameterized on $k$. To do so, we proposed an algorithms that solve \textsc{Interference $3$-Colourability} in time $\bigO(|R|^2)$ and \textsc{Interference $k$-Colourability} in time $\bigO(k36^k|R|^3)$ when $k\geq 4$.
However, the complexity of \textsc{Interference Colourability} remains unknown.

\begin{problem}
Is \textsc{Interference Colourability} polynomial-time solvable ? Is it NP-complete ?
\end{problem}

Another question is how good it can be approximated in polynomial time. In Corollary~\ref{cor:approximate}, we showed that it can be $2$-approximated but it might be better approximated.

\begin{problem}
Find a polynomial-time $r$-approximation algorithm for some $r<2$ or prove that none exists (up to standard hypotheses) for some $r\geq 1$. 
\end{problem}

Our $2$-approximation exploits the fact that the class of interference graphs admits a $\chi$-bounding function, that is  a function $f$ such that $\chi(G) \leq f(\omega(G))$ for every interference graph $G$. Theorem~\ref{thm:chi-bound} shows that $f\colon t \mapsto 2t$ is a $\chi$-bounding function of interference graphs.

\begin{problem}
What is the smallest $\chi$-bounding function for interference graphs, that is the smallest function $f^*$ such that $\chi(R,T) \leq f^*(\omega(R,T))$ for every set $R$ of requests in a bidirected tree $T$ ?
\end{problem}

We note that $f^*(2t) \ge \ceil{5t/2}$.
Indeed consider the bidirected tree with vertex set $\{a,b,c,d\}$ and edge set $\{ab, ba, bc, cb, bd, db\}$.
Let $R_1$ (resp. $R_2$, $R_3$, $R_4$, $R_5$)  be a set of $t$ requests equal to $(a,b)$ (resp. $(b,c)$, $(d,b)$, $(b,a)$, $(c,b,d)$), and let $R=R_1\cup R_2 \cup R_3 \cup R_4 \cup R_5$.
One can easily check that ${\cal I}(R,T)$ is $C_5[K_t]$, the lexicographic product of the $5$-cycle $C_5$ by the complete graph $K_t$.
Hence $\omega(R,T) = 2t$ and $\chi(R,T)=\ceil{5t/2}$.



\medskip

A natural way to find a proper colouring of a graph is to successively select a maximum independent set.
This is a polynomial-time algorithm for interference graph because a maximum independent set can be computed in polynomial time by Theorem~\ref{thm:indep}.

\begin{problem}
    Does the algorithm consisting in finding successively a maximum independent set in the subgraph induced by the uncoloured vertices have an approximation guarantee on interference graphs ?
    That is, does there exist a constant $C$ such that this algorithm produces a colouring with at most $C \cdot \chi(G)$ for every interference graph $G$ ?
\end{problem}

Note that the set $R$ of requests in the bidirecteed tree $T$ depicted in Figure~\ref{fig:3/2} shows that such a constant $C$ is at least $3/2$.
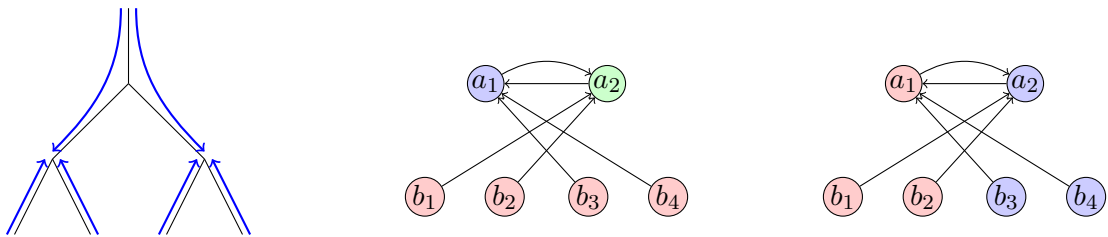
\begin{figure}[!ht]
\begin{center}
\begin{tikzpicture}
\tikzstyle{vertex}=[circle,draw, minimum size=12pt, scale=1, inner sep=0.5pt]
\coordinate (x) at (0,0);
\coordinate (x') at (-0.1,0);
\coordinate (x'') at (0.1,0);
\coordinate (y) at (0,-1);
\coordinate (z0) at (-1,-2);
\coordinate (z0') at (-1.1,-2);
\coordinate (z0'') at (-0.9,-2);
\coordinate (z0''') at (-1,-1.9);
\coordinate (z1) at (1,-2);
\coordinate (z1') at (0.9,-2);
\coordinate (z1'') at (1.1,-2);
\coordinate (z1''') at (1,-1.9);
\coordinate (t0) at (-1.5,-3);
\coordinate (t0') at (-1.6,-3);
\coordinate (t0'') at (-1.4,-3);
\coordinate (t1) at (-0.5,-3);
\coordinate (t1') at (-0.6,-3);
\coordinate (t1'') at (-0.4,-3);
\coordinate (t2) at (0.5,-3);
\coordinate (t2') at (0.4,-3);
\coordinate (t2'') at (0.6,-3);
\coordinate (t3) at (1.5,-3);
\coordinate (t3') at (1.4,-3);
\coordinate (t3'') at (1.6,-3);

\draw (x) -- (y);
\draw (y) -- (z0);
\draw (y) -- (z1);
\draw (z0) -- (t0);
\draw (z0) -- (t1);
\draw (z1) -- (t2);
\draw (z1) -- (t3);

\draw[thick, blue,->] (x') to[out=270, in=45] (z0''');
\draw[thick, blue,->] (x'') to[out=270, in=135] (z1''');
\draw[thick, blue,->] (t0') to (z0');
\draw[thick, blue,->] (t1'') to (z0'');
\draw[thick, blue,->] (t2') to (z1');
\draw[thick, blue,->] (t3'') to (z1'');

\begin{scope}[xshift=5.5cm]
 \node[vertex, fill=blue!20] (a1) at (-0.8,-1){$a_1$};
\node[vertex, fill=green!20] (a2) at (0.8,-1){$a_2$};
 \node[vertex, fill=red!20] (b1) at (-1.6,-2.5){$b_1$};
\node[vertex, fill=red!20] (b2) at (-0.55,-2.5){$b_2$};
\node[vertex, fill=red!20] (b3) at (0.55,-2.5){$b_3$};
\node[vertex, fill=red!20] (b4) at (1.6,-2.5){$b_4$};

\draw[->] (a1) to[bend left] (a2);
\draw[->] (a2) to (a1);
\draw[->] (b4) to (a1);
\draw[->] (b3) to (a1);
\draw[->] (b1) to (a2);
\draw[->] (b2) to (a2);
\end{scope}

\begin{scope}[xshift=11cm]
 \node[vertex, fill=red!20] (a1) at (-0.8,-1){$a_1$};
\node[vertex, fill=blue!20] (a2) at (0.8,-1){$a_2$};
 \node[vertex, fill=red!20] (b1) at (-1.6,-2.5){$b_1$};
\node[vertex, fill=red!20] (b2) at (-0.55,-2.5){$b_2$};
\node[vertex, fill=blue!20] (b3) at (0.55,-2.5){$b_3$};
\node[vertex, fill=blue!20] (b4) at (1.6,-2.5){$b_4$};

\draw[->] (a1) to[bend left] (a2);
\draw[->] (a2) to (a1);
\draw[->] (b4) to (a1);
\draw[->] (b3) to (a1);
\draw[->] (b1) to (a2);
\draw[->] (b2) to (a2);
\end{scope}

\end{tikzpicture}
\caption{
A set $R$ of requests in blue  in a bidirected tree $T$ in black (left) ; the interference digraph ${\cal I}(R,T)$ coloured by the algorithm (middle) and optimally coloured (right).}\label{fig:3/2}
\end{center}
\end{figure}

\bibliographystyle{plain}
\bibliography{references}
\end{document}